\DeclareMathAlphabet{\eurm}{U}{eur}{m}{n}
\DeclareMathAlphabet{\mathbsf}{OT1}{cmss}{bx}{n}\DeclareMathAlphabet{\mathssf}{OT1}{cmss}{m}{sl}\DeclareMathAlphabet{\mathcsf}{OT1}{cmss}{sbc}{n}
\DeclareSymbolFont{bsfletters}{OT1}{cmss}{bx}{n}  
\DeclareSymbolFont{ssfletters}{OT1}{cmss}{m}{n}
\DeclareMathSymbol{\bsfGamma}{0}{bsfletters}{'000}
\DeclareMathSymbol{\ssfGamma}{0}{ssfletters}{'000}
\DeclareMathSymbol{\bsfDelta}{0}{bsfletters}{'001}
\DeclareMathSymbol{\ssfDelta}{0}{ssfletters}{'001}
\DeclareMathSymbol{\bsfTheta}{0}{bsfletters}{'002}
\DeclareMathSymbol{\ssfTheta}{0}{ssfletters}{'002}
\DeclareMathSymbol{\bsfLambda}{0}{bsfletters}{'003}
\DeclareMathSymbol{\ssfLambda}{0}{ssfletters}{'003}
\DeclareMathSymbol{\bsfXi}{0}{bsfletters}{'004}
\DeclareMathSymbol{\ssfXi}{0}{ssfletters}{'004}
\DeclareMathSymbol{\bsfPi}{0}{bsfletters}{'005}
\DeclareMathSymbol{\ssfPi}{0}{ssfletters}{'005}
\DeclareMathSymbol{\bsfSigma}{0}{bsfletters}{'006}
\DeclareMathSymbol{\ssfSigma}{0}{ssfletters}{'006}
\DeclareMathSymbol{\bsfUpsilon}{0}{bsfletters}{'007}
\DeclareMathSymbol{\ssfUpsilon}{0}{ssfletters}{'007}
\DeclareMathSymbol{\bsfPhi}{0}{bsfletters}{'010}
\DeclareMathSymbol{\ssfPhi}{0}{ssfletters}{'010}
\DeclareMathSymbol{\bsfPsi}{0}{bsfletters}{'011}
\DeclareMathSymbol{\ssfPsi}{0}{ssfletters}{'011}
\DeclareMathSymbol{\bsfOmega}{0}{bsfletters}{'012}
\DeclareMathSymbol{\ssfOmega}{0}{ssfletters}{'012}
\newcommand{\calA}{{\mathcal{A}}}
\newcommand{\calB}{{\mathcal{B}}}
\newcommand{\calD}{{\mathcal{D}}}
\newcommand{\calE}{{\mathcal{E}}}
\newcommand{\calG}{{\mathcal{G}}}
\newcommand{\calI}{{\mathcal{I}}}
\newcommand{\calK}{{\mathcal{K}}}
\newcommand{\calL}{{\mathcal{L}}}
\newcommand{\calM}{{\mathcal{M}}}
\newcommand{\calN}{{\mathcal{N}}}
\newcommand{\calT}{{\mathcal{T}}}
\newcommand{\calS}{{\mathcal{S}}}
\newcommand{\calX}{{\mathcal{X}}}
\newcommand{\calY}{{\mathcal{Y}}}
\newcommand{\E}[2][]{{\mathbb{E}_{#1}}{\left(#2\right)}}       
\renewcommand{\P}[2][]{{\textnormal{Pr}_{#1}}{\left[#2\right]}}
\newcommand{\card}[1]{\ensuremath{\left|{#1}\right|}}           \newcommand{\abs}[1]{\ensuremath{\left|#1\right|}}              \newcommand{\norm}[2][]{\ensuremath{{\left\Vert{#2}\right\Vert}_{#1}}}   \newcommand{\eqdef}{\ensuremath{\triangleq}}                      \newcommand{\indic}[1]{\ensuremath{\mathds{1}\!\left\{#1\right\}}}
\renewcommand{\leq}{\leqslant}
\renewcommand{\geq}{\geqslant}
\newcommand{\tr}[1]{\ensuremath{\text{\textnormal{tr}}\left(#1\right)}}
\newcommand{\argmax}{\mathop{\text{argmax}}}
\newcommand{\pr}[1]{\left({#1}\right)}
\newcommand{\ket}[1]{|#1\rangle}
\newcommand{\bra}[1]{\langle #1 |}
\newcommand{\braket}[2]{\langle #1 | #2 \rangle}
\newcommand{\id}{\textnormal{id}}
\newcommand{\one}{\mathbf{1}}
\newcommand{\N}{\mathbb{N}}
\newcommand{\set}[1]{{\left\{#1\right\}}}
\newcommand{\ketbra}[2]{{\ket{#1}\bra{#2}}}
\newcommand{\kb}[1]{{\ket{#1}\bra{#1}}}
\newcommand{\Dens}[1]{\mathcal{D}\!\pr{#1}}
\newcommand{\Unit}[1]{\textnormal{U}\!\pr{#1}}
\newcommand{\Lin}[1]{\calL\!\pr{#1}}
\newcommand{\pg}[5][]{\omega^{#1}_{\textnormal{#2}}(#3|#4)_{#5}}
\newcommand{\pwinunif}[2]{\textnormal{p}_{\textnormal{win-unif}} \pr{#1;#2}}
\newcommand{\pwinunifs}[1]{\textnormal{p}_{\textnormal{win-unif}}^* \pr{#1}}
\newcommand{\pwinind}[2]{\textnormal{p}_{\textnormal{win-ind}} \pr{#1;#2}}
\newcommand{\pwininds}[1]{\textnormal{p}_{\textnormal{win-ind}}^* \pr{#1}}
\newcommand{\rank}[1]{{\textnormal{rank}}\pr{#1}}
\newcommand{\KeyGen}{\mathsf{KeyGen}}
\newcommand{\Enc}{\mathsf{Enc}}
\newcommand{\Dec}{\mathsf{Dec}}
\mathchardef\mhyphen="2D
\newcommand{\haarscheme}[2]{\mathsf{HMB\mhyphen QECM}\pr{#1;#2}}
\newcommand{\haarschemeu}[1]{\mathsf{HMB\mhyphen QECM}\pr{#1}}
\newcommand{\Haar}{\mathsf{Haar}}
\newcommand{\needspace}[2]{#2}
\newcommand{\proj}[1]{\ket{#1}\!\bra{#1}}
\newcommand{\negl}{\mathrm{negl}}
\title{Limitations on Uncloneable Encryption and\\
  Simultaneous One-Way-to-Hiding}
\author{Christian Majenz \inst{1,2,4}, Christian Schaffner \inst{2,3}, Mehrdad Tahmasbi \inst{2,3} 
\institute{
	Centrum Wiskunde \& Informatica (CWI), Amsterdam, Netherlands \and 
	QuSoft, Amsterdam, Netherlands \and
	Institute for  Logic, Language and Computation (ILLC), University of Amsterdam, Amsterdam, Netherlands \and
	Department of Applied Mathematics and Computer Science, Technical University of Denmark, Kgs. Lyngby, Denmark
	\\ \email{christian.majenz@cwi.nl}, \email{c.schaffner@uva.nl}, \email{m.tahmasbi@uva.nl}
}}
\begin{document}
\maketitle
\begin{abstract}
We study uncloneable quantum encryption schemes for classical messages as recently proposed by Broadbent and Lord~\cite{broadbent2019uncloneable}. We focus on the information-theoretic setting and give several limitations on the structure and security of these schemes: Concretely, 1) We give an explicit cloning-indistinguishable attack that succeeds with probability $\frac12 + \mu/16$ where $\mu$ is related to the largest eigenvalue of the resulting quantum ciphertexts. 2) For a uniform message distribution, we partially characterize the scheme with the minimal success probability for cloning attacks. 3) Under natural symmetry conditions, we prove that the rank of the ciphertext density operators has to grow at least logarithmically in the number of messages to ensure uncloneable security. 4) The \emph{simultaneous} one-way-to-hiding (O2H) lemma is an important technique in recent works on uncloneable encryption and quantum copy protection. We give an explicit example which shatters the hope of reducing the multiplicative ``security loss'' constant in this lemma to below 9/8.
\end{abstract}

\section{Introduction}

The linearity of quantum mechanics, leading to features like the ``quantum no-cloning theorem'' and the imperfect distinguishability of non-orthogonal quantum states, has opened up several opportunities in cryptography. In fact, the very first protocols in quantum information processing where cryptographic applications capitalizing on the mentioned features of quantum theory, like Wiesner's quantum money  \cite{Wiesner83} and the BB84 key exchange protocol \cite{BB84} (for a survey of other cryptographic applications, see~\cite{BS16}). Another application of the fact that quantum information cannot be copied, called \emph{uncloneable encryption}, was recently explored by Broadbent and Lord~\cite{broadbent2019uncloneable}: Using a secret key, Alice encrypts a classical message into a \emph{quantum} ciphertext, which is passed to a cloner, Eve, who copies the ciphertext into two quantum registers. These quantum registers are then provided to two separated parties, Bob and Charlie. Alice also provides the secret key to Bob and Charlie who attempt to guess the message. The adversaries Eve, Bob, and Charlie win if and only if Bob and Charlie both correctly decrypt the message. Given a message drawn from some (possibly adversarially chosen) distribution $p$ over a set of size $M$, the adversaries can always win with the maximum probability that $p$ assigns to any message, by outputting the same fixed message as their guess.  The goal is to devise an encryption scheme such that 1) a legitimate receiver can recover the message from the ciphertext and the secret key; 2) the probability that the adversaries win does not significantly exceed the trivial success probability. Following \cite{broadbent2019uncloneable}, we call the latter requirement \emph{uncloneable-security}, which is a strengthening of the confidentiality notion using the same message distribution (e.g. one-wayness if $p$ is the uniform distribution, and IND if  $p$ is uniform on an adversarially chosen pair of messages).\footnote{Gottesman studied a different definition of uncloneable encryption called \emph{quantum tamper-detection}~\cite{G03}.} Studying this notion of security is motivated by cryptographic applications such as quantum money and the prevention of storage attack by classical adversaries\footnote{In this scenario, an eavesdropper Eve without quantum memory is forced to immediately measure any intercepted quantum ciphertexts, without knowing the key (yet). Uncloneable security implies that these classical measurement outcomes cannot be used to determine the message even after learning the key, because otherwise, copying these classical outcomes and handing them to Bob and Charlie would violate uncloneable security.}~\cite{broadbent2019uncloneable}. Furthermore, uncloneable encryption is fundamentally related to quantum copy-protection \cite{Aaronson09,ALLZZ20,ALP20,CMP20}. 

The adversaries can win with probability one if the ciphertext is classical and zero-error decryption of the ciphertext is possible, by simply copying the classical ciphertext and share copies between Bob and Charlie. Broadbent and Lord~\cite{broadbent2019uncloneable} constructed two schemes with quantum ciphertexts. First, they studied the scheme in which each bit is randomly encoded in a BB84 basis determined by a secret key. The optimal probability of winning was shown to be $\pr{\frac{1}{2} + \frac{1}{2\sqrt{2}}}^n$ when the message is $n$ random bits. Second, they constructed a scheme based on a random oracle to which Alice, Bob, and Charlie have quantum access. They prove that when a message is uniformly distributed over a set of size $M$ and the number of queries made by Bob and Charlie is polynomially bounded, the optimal probability of winning is bounded by $\frac{9}{M} + \negl(\lambda)$ where $\lambda$ is a security parameter.

The security proof of the scheme introduced in~\cite{broadbent2019uncloneable}, as well as the security proof of the copy-protection scheme presented in \cite{CMP20}, are based on a \emph{``simultaneous''} variant of the so-called one-way-to-hiding (O2H) lemma, originally introduced by Unruh \cite{Unruh15} (see \cite{AHU19,BHHHP19,KSSSS20} for variations and improvements).  A variant of  that result implies that given a quantum algorithm $\calA$ having quantum oracle access to a random function $H:\calX\to \calY$, the probability that the algorithm correctly finds $H(x)$ for a fixed input $x\in \calX$ is upper-bounded by $	\frac{1}{\card{\calY}} + q\sqrt{p}$ 
where $q$ is the number of queries made by $\calA$ and $p$ is the probability of obtaining $x$ when measuring the input register of the oracle for a randomly chosen query. In the ``simultaneous'' version of this problem, two non-communicating parties with shared entanglement,  run quantum algorithms $\calA$ and $\calB$ with quantum oracle access to the same random function $H:\calY \to \calX$. In \cite[Lemma 21]{broadbent2019uncloneable}, it has been shown that the probability that both algorithms correctly output $H(x)$ for a fixed $x$ is upper-bounded by
\needspace{$9/|\calY| + \text{poly}(q_{\calA}, q_{\calB}) \sqrt{p}$,}{\begin{align}
	\frac{9}{\card{\calY}} + \text{poly}(q_{\calA}, q_{\calB}) \sqrt{p},
\end{align}}
where $q_{\calA}$ and $q_{\calB}$ are the number of queries made by $\calA$ and $\calB$, respectively, poly is a polynomial, and $p$ is the probability that measuring the input registers of both algorithms at two independently chosen queries returns $x$ on both sides.

\subsection{Our Contributions}
In this article, we explore the fundamental information-theoretic limits of uncloneable encryption. In particular, we prove the following four results.
\begin{enumerate}
    \item When the message is chosen uniformly at random to be either an adversarially chosen message or a default one, we construct in Section~\ref{sec:simultaneousguessing} an explicit cloning attack with probability of success $\frac{1}{2} + \frac{\mu}{16}$ where $\mu$ is the maximum (over all messages) of the average (with respect to the key) of the maximum eigenvalue of the corresponding ciphertext (see Corollary~\ref{cor:ind-converse}). 
This bound implies that in order to have the probability of success for all attacks limited to at most $\frac{1}{2} + \negl(\lambda)$ (as desired in \cite[Definition 11]{broadbent2019uncloneable}), the quantity $\mu$ should be negligible in $\lambda$.
\item When the message is uniformly distributed over all messages, we show in Section~\ref{sec:uncloneablesecurity} that there exists an attack with the probability of winning $\Omega\pr{\needspace{(\log M /(dM))^{1/2}}{{\frac{\log M }{d}}}}$ where $M$  is the number of messages and $d$ is the dimension of the Hilbert space corresponding to ciphertexts (see Theorem~\ref{th:lower-bound-unif}). This lower bound implies that to ensure that the probability of winning is $O\pr{\frac{1}{M}}$ for all adversaries (as desired in \cite[Definition 8]{broadbent2019uncloneable}), the rank of the ciphertext density operators has to grow at least as $\Omega(\log M)$.
\item  Fixing the number of messages $M$ and the dimension of the ciphertext Hilbert space, we partially characterize  the encryption scheme that minimizes the optimal adversarial winning probability when the message is uniformly distributed over all possible messages (Section~\ref{sec:optimalscheme}). Our characterization involves an optimization over all probability distributions over a finite set.\needspace{}{\\}
  In addition, we formulate a natural conjecture that the optimal probability distribution is in fact deterministic.\item  Finally, in Section~\ref{sec:counterexample}, we employ some of the insights from 1. \needspace{}{above} in the context of the simultaneous O2H lemma. Here, an important open question is whether the factor $9$ is an artifact of the proof technique used in \cite{broadbent2019uncloneable}\needspace{}{, or whether a probability of success of
$\frac{1}{\card{\calY}} + \text{poly}(q_{\calA}, q_{\calB}) \sqrt{p}$ is, in fact, possible}? Put differently, do there actually exist algorithms $\calA$ and $\calB$ that simultaneously succeed in guessing $H(x)$ with non-trivial probability while not allowing simultaneous extraction of $x$ from their queries? 

In this work, we answer the above question. We provide an example with $p=0$ (so simultaneous query-based extraction never succeeds), $\calY = \set{0, 1}$ but $\calA$ and $\calB$ both output $H(0)$ with probability $9/16$, which is strictly larger than the trivial $\frac12$. This example illustrates that the simultaneous setting is fundamentally different from the single-party setting.

\end{enumerate}

\section{Notation and Preliminaries}
Let $\N$ denote the set of positive integers. For $n\in \N$, $[n]$ denotes the set $\set{0, \ldots, n-1}$.

With some abuse of notation, we denote the Hilbert space corresponding to the quantum system $A$ by $A$ itself. $\card{A}$ denotes the dimension of $A$. We also denote the tensor product of $A$ and $B$ by $AB$. $\one_A$ is the identity operator over $A$. Let $\Lin{A}$ denote the set of all linear operators from $A$ to $A$, $\Dens{A}$ denote the set of all density operators over $A$ and $\Unit{A}$ denote the set of all unitary operators over $A$. A quantum channel from the quantum system $A$ to the quantum system $B$ is a linear trace-preserving completely positive map from $\Lin{A}$ to $\Lin{B}$. A positive operator-valued measure (POVM) over quantum system $A$ is a collection $\set{P_x}_{x\in \calX}$, where $\calX$ is a finite set, $P_x$ is a positive operator in $\Lin{A}$ for all $x\in \calX$, and $\sum_{x\in \calX} P_x = \one_A$. For a density matrix $\rho_{A} \in \Dens{A}$, $\lambda_{\max}(\rho_A)$ denotes the maximum eigenvalue of $\rho_A$.

The \emph{uniformly spherical measure} and \emph{Haar measure} are defined over the unit sphere in $A$ and $\Unit{A}$, respectively, as in \cite[Chapter 7]{Watrous2018}.

\subsection{Uncloneable Encryption} \label{sec:definitions}
We recall
the definition of an encryption scheme that encrypts a classical message and a classical key to a quantum ciphertext. 
\begin{definition}[\cite{broadbent2019uncloneable}, Definition~4]
  A \emph{quantum encryption of classical messages} scheme (QECM) is a triplet of algorithms $\mathcal{E} = (\KeyGen, \Enc, \Dec)$ described as follows:
  \begin{itemize}
  \item The key generation algorithm $\KeyGen$ samples a classical key $k \in \mathcal{K}$ from the key space $\calK$ with distribution $P_K$.
  \item The encryption algorithm $\Enc_{k}(m)$ takes as inputs the classical key $k$ and classical message $m \in \mathcal{M}$ from the message set $\mathcal{M}$ and produces a quantum ciphertext $\rho_A \in \Dens{A}$.
    \item The decryption algorithm $\Dec_{k}(\rho_A)$ takes as inputs the classical key $k$ and quantum ciphertext $\rho_A$ and returns the classical message $m \in \mathcal{M}$.
    \end{itemize}
\end{definition}
We note that our definition of a QECM differs slightly from~\cite{broadbent2019uncloneable}. In particular, we do not include a security parameter in our definition of a QECM, because we only study information-theoretic security in this article and therefore do not impose any computational assumptions on the adversary. Our results hold for any fixed underlying parameter of the scheme.

Correctness is defined in the natural way.
\begin{definition}
\label{def:correct}
  A QECM $\mathcal{E} = (\KeyGen, \Enc, \Dec)$ is \emph{(perfectly) correct} if for all $k$ produced by $\KeyGen$, and for all messages $m \in \mathcal{M}$, it holds that
  \begin{align}
    \Pr[\Dec_k(\Enc_k(m)) = m] = 1 \, .
  \end{align}
\end{definition}

As in~\cite{broadbent2019uncloneable}, we study two flavors of uncloneable security.  

\subsubsection{Uncloneable Security for Uniformly Distributed Messages}
A uniformly distributed message is encrypted using a secret key. The adversary then ``clones'' the ciphertext into two quantum registers and passes each register on to a separate party, called Bob and Charlie. Bob and Charlie then learn the secret key and attempt to individually decrypt the message. They are successful if they simultaneously decrypt the correct value of the message.

\begin{definition}[Cloning Attack, \cite{broadbent2019uncloneable}, Definition~7]
Let $\mathcal{E} = (\KeyGen, \Enc, \Dec)$ be a QECM scheme. A \emph{cloning attack} against $\mathcal{E}$ is a triple $\mathcal{A}=(\mathcal{N}_{A \to BC},\{P_m^k\},\{Q_m^k\})$ such that 
    \begin{itemize}
    \item The quantum channel $\calN_{A \to BC}$ describes the adversary's cloning operation.
      \item For every possible key $k$, $\set{P_m^k}_{m\in \calM}$ is Bob's POVM on $B$ to guess the message $m$.
      \item For every possible key $k$, $\set{Q_m^k}_{m\in \calM}$ is Charlie's POVM on $C$ to guess the message $m$.
\end{itemize}
The success probability of a cloning attack $\calA$ against encryption scheme $\calE$ with uniform messages is
\begin{align}
    \pwinunif{\mathcal{E}}{\mathcal{A}} \eqdef \frac{1}{\card{\calM}}\sum_m \E[k\leftarrow \KeyGen]{\tr{P_m^k\otimes Q_m^k \mathcal{N}_{A\to BC}(\Enc_k(m))}}\
\end{align}
\end{definition}    

We also define the optimal probability of winning as
\begin{align}
    \pwinunifs{\calE} \eqdef \needspace{ {\sup}_{\mathcal{A}}\,}{ \sup_{\mathcal{A}}} \pwinunif{\mathcal{E}}{ \mathcal{A}}. \,
\end{align}
where the supremum is taken over all cloning attacks $\calA$. 

\subsubsection{Uncloneable-Indistinguishable Security}
Here, the adversary chooses a message $m_1\in \calM$. The encrypted message is uniformly distributed over the set $\set{m_0, m_1}$ for a fixed $m_0\in \calM$. The rest of the definition of an attack $\calA$ is similar to in the previous section. 

\begin{definition}[Cloning-indistinguishability  Attack, \cite{broadbent2019uncloneable}, Definition~10]
Let $\mathcal{E} = (\KeyGen, \Enc, \Dec)$ be a QECM scheme and fix $m_0 \in \calM$. A \emph{cloning-indistinguishability attack} against $\mathcal{E}$ and $m_0$ is a quadruple $\mathcal{A}=(m_1, \mathcal{N}_{A \to BC},\{P_b^k\},\{Q_b^k\})$ such that 
    \begin{itemize}
    \item $m_1$ is a message in $\calM\setminus \set{m_0}$.
    \item The quantum channel $\calN_{A \to BC}$ describes the adversary's cloning operation.
      \item For every possible key $k$, $\set{P_b^k}_{b\in\set{0, 1}}$ is Bob's POVM on $B$ to guess the message $m_b$.
      \item For every possible key $k$, $\set{Q_b^k}_{b\in\set{0, 1}}$ is Charlie's POVM on $C$ to guess the message $m_b$.
\end{itemize}
The \emph{success probability of a cloning-indistinguishability attack $\calA$ against encryption scheme $\calE$} is
\begin{align}
    \pwinind{\calE}{\calA} \eqdef \frac{1}{2}\needspace{{\sum}_{b \in \{0,1\}}\,}{\sum_{b \in \{0,1\}}}\E[k\leftarrow \KeyGen] {\tr{P_b^k\otimes Q_b^k \mathcal{N}_{A\to BC}(\Enc_k(m_b))}}.
\end{align}
\end{definition}    

We also define the optimal probability of winning as
\begin{align}
    \pwininds{\mathcal{E}} \eqdef \needspace{ {\sup}_{\mathcal{A}}\,}{ \sup_{\mathcal{A}}}\pwinind{\mathcal{E}}{ \mathcal{A}}. \, 
\end{align}

\section{Simultaneous Guessing and Uncloneable-Indistinguishable Security} \label{sec:simultaneousguessing}
We consider the situation where two parties Bob and Charlie have quantum side information about a classical random variable $X$ belonging to set $\calX$, and they simultanously try to guess $X$ by local measurements.

\begin{definition}
For a classical-quantum-quantum (cqq) state $\rho_{XBC} =\sum_{x\in \calX} P_X(x) \kb{x} \otimes \rho_{BC}^x$, the \emph{simultaneous guessing probability} is defined as 
\needspace{\begin{align}
    \pg{X}{B}{C}{\rho} \eqdef {\sup}_{\set{P_x}_{x\in \calX}, \set{Q_x}_{x\in \calX}} {\sum}_{x \in \calX} P_X(x) \: \tr{(P_x \otimes Q_x) \rho_{BC}^x}
\end{align}}
{\begin{align}
		\pg{}{X}{B;C}{\rho} \eqdef \sup_{\set{P_x}_{x\in \calX}, \set{Q_x}_{x\in \calX}} \sum_{x \in \calX} P_X(x) \: \tr{(P_x \otimes Q_x) \rho_{BC}^x}
\end{align}}
\end{definition}

One of our central objects of study in this article is the following ``cloning operation:''
\begin{align} \label{eq:cloning}
    V_{A \to BC}: \ket{\phi} \mapsto\needspace{ \pr{\ket{\bot}_B \otimes \ket{\phi}_C + \ket{\phi}_B \otimes \ket{\bot}_C}/\sqrt 2}{ \frac{1}{\sqrt{2}} \pr{\ket{\bot}_B \otimes \ket{\phi}_C + \ket{\phi}_B \otimes \ket{\bot}_C}},
\end{align}
where $\ket{\bot}$ is a unit vector orthogonal to $A$. Intuitively, $V_{A\to BC}$ \needspace{sends $\ket\psi$ to $B$ or $C$}{distributes the input state to $B$ and $C$} ``in superposition.''

Let $\rho, \sigma$ be perfectly distinguishable states, for example $\rho=\proj{0}, \sigma=\proj{1}$. We now consider the task where for a random bit $X \in \{0,1\}$, Bob and Charlie have to simultaneously  distinguish the \needspace{cases where $V \rho V^\dagger$  ($X=0$), or $V \sigma V^\dagger$ ($X=1$) is handed to Bob and Charlie.}{following two cases:
\begin{itemize}
\item if $X=0$: $V \rho V^\dagger$ is handed to Bob and Charlie, 
\item if $X=1$: $V \sigma V^\dagger$ is handed to Bob and Charlie.
\end{itemize}}
The following lemma gives a non-trivial lower bound on their simultaneous guessing probability of $X$\needspace{}{ for this task}. In particular, for pure states like $\rho=\proj{0}, \sigma=\proj{1}$, we obtain a lower bound of $\frac12 + \frac{1}{16}=\frac 9{16}$. At first sight, it seems counterintuitive that Bob and Charlie are able to succeed with probability strictly higher than $\frac12$\needspace{: A}{. One might think that a}fter applying the cloning operation $V_{A \to BC}$, the state \needspace{should be}{is} either with Bob or with Charlie, so the other party will succeed with probability at most $\frac12$. However, as one can see from the explicit simultaneous guessing strategy that we construct in the proof of the lemma, Bob and Charlie can exploit the quantum coherence of the state after applying $V_{A \to BC}$ to achieve a simultaneous guessing probability strictly larger than $\frac12$.

\begin{lemma}
\label{lm:pg-lower-bound}
Let $\rho, \sigma \in \Dens{A}$ such that $\rho\sigma= 0$ and define $\tau_{XBC} \eqdef\frac{1}{2} \kb{0}_X \otimes V\rho V^\dagger +\frac{1}{2} \kb{1}_X \otimes V\sigma V^\dagger$. We have
\begin{align}
    \pg{}{X}{B;C}{\tau} \geq \frac{1}{2} + \frac{\max(\lambda_{\max}(\rho), \lambda_{\max}(\sigma))}{16}
\end{align}
\end{lemma}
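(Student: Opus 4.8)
The plan is to prove the bound by exhibiting one explicit pair of (identical) two‑outcome POVMs for Bob and Charlie and computing the resulting simultaneous guessing probability; since $\pg{}{X}{B;C}{\tau}$ is a supremum over all local measurements, any concrete strategy is a valid lower bound. Without loss of generality I would assume $\lambda \eqdef \lambda_{\max}(\rho) \geq \lambda_{\max}(\sigma)$, the opposite case being symmetric under swapping the roles of $X=0$ and $X=1$. Let $\ket{\psi}$ be a top eigenvector of $\rho$, so $\bra{\psi}\rho\ket{\psi}=\lambda$; since $\rho\sigma=0$ the supports are orthogonal, so $\sigma\ket{\psi}=0$ and $\braket{\psi}{f}=0$ for every $\ket f$ in $\supp{\sigma}$, while $\braket{\bot}{\phi}=0$ for every $\ket\phi\in A$.

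The construction I would use: both parties measure $\{P_0,\,P_1=\one-P_0\}$ and guess $X=0$ on outcome $P_0$, where
\[
  P_0 \eqdef \Pi_\rho - \kb{\psi} + \kb{u}, \qquad \ket u \eqdef \tfrac12\ket{\bot} + \tfrac{\sqrt3}{2}\ket{\psi},
\]
with $\Pi_\rho$ the projector onto $\supp{\rho}$. Because $\ket u\in\mathrm{span}\{\ket\bot,\ket\psi\}$ is orthogonal to $\supp{\rho}\ominus\ket\psi$, $P_0$ is genuinely a projector: it retains all of $\rho$'s support for the ``guess $0$'' event but tilts the top eigendirection into the $\ket\bot$ subspace to create $\bot$–$\psi$ coherence, and it annihilates $\supp{\sigma}$.

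The core of the computation is the identity, valid for any unit $\ket e\in A$ and any Hermitian $P$,
\[
  \bra{e}V^\dagger (P\otimes P) V\ket{e} = \bra{\bot}P\ket{\bot}\,\bra{e}P\ket{e} + \abs{\bra{\bot}P\ket{e}}^2 ,
\]
which follows by expanding $V\ket e = \tfrac1{\sqrt2}(\ket\bot_B\ket e_C + \ket e_B\ket\bot_C)$ and using $\braket{\bot}{e}=0$. Writing $\rho=\sum_j\lambda_j\kb{e_j}$ and $\sigma=\sum_k r_k\kb{f_k}$ spectrally and summing this identity produces closed forms for $\tr{(P_0\otimes P_0)V\rho V^\dagger}$ and $\tr{(P_1\otimes P_1)V\sigma V^\dagger}$. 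With the chosen $P_0$ one reads off $\bra{\bot}P_0\ket{\bot}=\tfrac14$, $\bra{\psi}P_0\ket{\psi}=\tfrac34$, $\abs{\bra{\bot}P_0\ket{\psi}}^2=\tfrac3{16}$, while $P_0$ acts as the identity on $\supp{\rho}\ominus\ket\psi$ and as $0$ on $\supp{\sigma}$. A short calculation then gives $\tr{(P_0\otimes P_0)V\rho V^\dagger}=\tfrac{\lambda+2}{8}$ and $\tr{(P_1\otimes P_1)V\sigma V^\dagger}=\tfrac34$, whose average is exactly $\tfrac12+\tfrac{\lambda}{16}$.

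I expect the main obstacle to be \emph{finding} the right POVM rather than verifying it. The naive rank‑one ``rotated'' projector $\kb u$ is optimal when $\rho$ is pure (recovering $\tfrac9{16}$) but wastes all of $\rho$'s non‑top support and actually drops below $\tfrac12$ for highly mixed $\rho$. The fix is to keep the full support of $\rho$ and to spend the \emph{limited} coherence budget available from the single extra dimension $\ket\bot$ entirely on the top eigenvector; this concentration is exactly what manufactures the factor $\lambda$. The remaining quantitative content is a small optimization over the weights $\bra\bot P_0\ket\bot$, $\bra\psi P_0\ket\psi$ and the coherence $\bra\bot P_0\ket\psi$ subject to $0\le P_0\le\one$, which pins the constant to $\tfrac1{16}$; I would simply state the optimized values and check feasibility and the final arithmetic.
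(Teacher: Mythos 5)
Your proposal is correct and is essentially the paper's own proof: your projector $P_0 = (\Pi_\rho - \kb{\psi}) + \kb{u}$ with $\ket{u} = \tfrac12\ket{\bot} + \tfrac{\sqrt3}{2}\ket{\psi}$ is exactly the paper's $\Pi = \kb{\phi} + \sum_{i\neq 0}\kb{a_i}$ with $\ket{\phi} = \sqrt{1-\alpha}\ket{a_0} + \sqrt{\alpha}\ket{\bot}$ evaluated at the optimal $\alpha = 1/4$, used identically by both Bob and Charlie, and your trace computations (via the expansion of $V\ket{e}$) match the paper's term by term, yielding $\tfrac12\bigl(\tfrac{2+\lambda}{8} + \tfrac34\bigr) = \tfrac12 + \tfrac{\lambda}{16}$. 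The only cosmetic difference is that the paper carries $\alpha$ as a free parameter and optimizes it at the end, whereas you fix the optimized value from the start.
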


\begin{proof}
Let $|A| = d$. We consider eigen-decompositions 
\begin{align}
    \rho = \sum_{i\in[d]} \lambda_i \kb{a_i}, \qquad \sigma = \sum_{i\in[d]} \mu_i \kb{b_i}, 
\end{align}
such that $\lambda_0 \geq \cdots \geq \lambda_{d-1}$ and $\mu_0 \geq \cdots \geq \mu_{d-1}$. We set $\ket{\phi} \eqdef \sqrt{1-\alpha} \ket{a_0} + \sqrt{\alpha}\ket{\bot}$ for some parameter $\alpha \in [0, 1]$ (to be determined below) and
\begin{align}
    \Pi \eqdef \kb{\phi} +  \needspace{\sum_{i\in[|A|]\setminus\{1\}:\lambda_i >0}}{\sum_{i\in[d]\setminus\set{0}}} \kb{a_i}.
\end{align}
$\Pi$ is a projector and one can verify the following equalities by straightforward calculations:
\needspace{\begin{align}
		\bra{\bot}\Pi \ket{\bot} &= \alpha \label{eq:pi-a-calc1},\quad
		\bra{a_0}\Pi \ket{a_0} = 1-\alpha,\quad 
		\bra{a_0} \Pi \ket{\bot} =\bra{\bot} \Pi \ket{a_0} = \sqrt{\alpha(1-\alpha)}, \text{ as well as}\\
		\bra{a_i}\Pi \ket{a_i} &= 1 \text{ and }
		\bra{a_i} \Pi \ket{\bot} =\bra{\bot} \Pi \ket{a_i} = 0 \quad \forall i\in [d]\setminus \set{0} \text{ such that } \lambda_i > 0\label{eq:pi-a-calc5}
\end{align}}
{\begin{align}
    \bra{\bot}\Pi \ket{\bot} &= \alpha \label{eq:pi-a-calc1}\\
    \bra{a_0}\Pi \ket{a_0} &= 1-\alpha\\
    \bra{a_0} \Pi \ket{\bot} &=\bra{\bot} \Pi \ket{a_0} = \sqrt{\alpha(1-\alpha)}\\
    \bra{a_i}\Pi \ket{a_i} &= 1 \quad \forall i\in [d]\setminus \set{0} \text{ such that } \lambda_i > 0\\
    \bra{a_i} \Pi \ket{\bot} &=\bra{\bot} \Pi \ket{a_i} = 0 \quad \forall i\in [d]\setminus \set{0} \text{ such that } \lambda_i > 0.\label{eq:pi-a-calc5}
\end{align}}
It holds that $\lambda_i \mu_j \abs{\braket{a_i}{b_j}}^2 = 0$ for all $i$ and $j$ since $\rho\sigma= 0$. Hence $\Pi \ket{b_j} = 0$ for all $j$ with $\mu_j > 0$ and 
\needspace{\begin{align}
    \bra{b_j}(\one - \Pi)\ket{b_j} &= 1,\quad\quad \bra{b_j}(\one-\Pi)\ket{\bot} = \bra{\bot}(\one-\Pi)\ket{b_j} = 0\label{eq:pi-b-calc2},
\end{align}}
{\begin{align}
		\bra{b_j}(\one - \Pi)\ket{b_j} &= 1\label{eq:pi-b-calc1}\\ 
		\bra{b_j}(\one-\Pi)\ket{\bot} &= \bra{\bot}(\one-\Pi)\ket{b_j} = 0\label{eq:pi-b-calc2},
\end{align}}
 for all $j$ with $\mu_j > 0$.

Bob and Charlie both use the POVM $\{\Pi, \one - \Pi\}$ as their local guessing strategies for $X$. By definition of $\pg{}{X}{B;C}{\tau}$, we have
\begin{align}
    \pg{}{X}{B;C}{\tau} \geq \frac{1}{2}\pr{\tr{(\Pi \otimes \Pi)  V   \rho V^\dagger} + \tr{(\one - \Pi) \otimes (\one - \Pi) V\sigma V^\dagger} }.\label{eq:pg-pi}
\end{align}
The first term on the right-hand side of Eq.\eqref{eq:pg-pi} is
\needspace{\begin{align}
    &\tr{(\Pi \otimes \Pi)  V\rho V^\dagger}
    = \sum_{i\in[|A|]}\lambda_i\tr{(\Pi \otimes \Pi) V\kb{a_i}V^\dagger}= \frac{1}{2}\sum_{i\in[|A|]}\lambda_i\textnormal{tr} \Big( (\Pi \otimes \Pi) \big( \kb{a_i} \otimes \kb{\bot}\\
    & \!\!\!\!\!\!\! + \!  \kb{\bot} \!\otimes\! \kb{a_i} \! + 
  \ketbra{a_i}{\bot} \!\otimes\!\ketbra{\bot}{a_i} \! + \!  \ketbra{\bot}{a_i} \!\otimes\!\ketbra{a_i}{\bot}   \big) \Big)
   \!\stackrel{(a)}{=}\! 2\lambda_1\alpha(1\!-\!\alpha) \! +\! \alpha \!\!\!\!\!\!\needspace{\sum_{i\in[|A|]\setminus\{1\}}}{\sum_{i=2}^{|A|}} \!\!\!\!\!\!\lambda_i\!=\! \alpha \! +\! \lambda_1 \alpha (1\!-\!2\alpha),\label{eq:pi-tr}
\end{align}}
{\begin{align}
		\tr{(\Pi \otimes \Pi)  V\rho V^\dagger}
		&= \sum_{i\in[d]}\lambda_i\tr{(\Pi \otimes \Pi) V\kb{a_i}V^\dagger}\\
		\begin{split} &= \frac{1}{2}\sum_{i\in[d]}\lambda_i\textnormal{tr} \Big( (\Pi \otimes \Pi) \big( \kb{a_i} \otimes \kb{\bot} \! + \,  \kb{\bot} \otimes \kb{a_i} \, + \\
			& \qquad \qquad    \ketbra{a_i}{\bot} \otimes\ketbra{\bot}{a_i} \, + \,  \ketbra{\bot}{a_i} \otimes\ketbra{a_i}{\bot}   \big) \Big)
		\end{split} \\
		&\stackrel{(a)}{=} 2\lambda_1\alpha(1-\alpha)  + \alpha \needspace{\sum_{i\in[d]\setminus\{1\}}}{\sum_{i\in [d] \setminus \set{0}}} \lambda_i\\
		&= \alpha + \lambda_0 \alpha (1-2\alpha),\label{eq:pi-tr}
\end{align}}
where $(a)$ follows by using Eq.~\eqref{eq:pi-a-calc1}-\eqref{eq:pi-a-calc5}.
Similarly applying Eq.~\needspace{\eqref{eq:pi-b-calc2}}{\eqref{eq:pi-b-calc1} and \eqref{eq:pi-b-calc2}} yields that 
\begin{align}
    \tr{((\one-\Pi) \otimes (\one-\Pi))  V\sigma  V^\dagger} = 1 - \alpha. \label{eq:one-minus-pi-tr}
\end{align}
Combining Eq.~\eqref{eq:pg-pi}, \eqref{eq:pi-tr}, and \eqref{eq:one-minus-pi-tr} and setting $\alpha \eqdef 1/4$, we obtain that
\begin{align}
    \pg{}{X}{B;C}{\tau} \geq \frac{1}{2} + \frac{\lambda_0}{16}.
\end{align}
Finally, \needspace{WLOG}{without loss of generality} we can assume that $\lambda_0 \geq \mu_0$, and therefore, $\lambda_0 = \max(\lambda_{\max}(\rho), \lambda_{\max}(\sigma))$.
\qed\end{proof}

Applying the above lemma to \needspace{}{the setting of }uncloneable-indistinguishable encryption, we obtain the following\needspace{}{ corollary}. 
\begin{corollary}
\label{cor:ind-converse}
For any correct (see Definition~\ref{def:correct}) QECM scheme $\calE$ and arbitrary $m_0 \in \calM$, there exists an uncloneable-indistinguishable attack $\calA$ against $\calE$ and $m_0$ for which it holds that
\begin{align}
     \pwinind{\mathcal{E}}{ \mathcal{A}} \geq \frac{1}{2} + \frac{\max_{m \in \calM} \E[k\leftarrow \KeyGen]{ \lambda_{\max}(\Enc_k(m)) }}{16}
\end{align}
\end{corollary}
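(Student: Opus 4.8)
The plan is to instantiate the explicit strategy from Lemma~\ref{lm:pg-lower-bound} with $\rho=\Enc_k(m_0)$ and $\sigma=\Enc_k(m_1)$ for a well-chosen $m_1$, using the fixed cloning isometry $V$ of Eq.~\eqref{eq:cloning} as the adversary's cloning channel. First I would pick a maximizing message $m^*\in\argmax_{m\in\calM}\E[k\leftarrow\KeyGen]{\lambda_{\max}(\Enc_k(m))}$ and then fix $m_1$ so that $m^*\in\set{m_0,m_1}$: if $m^*\neq m_0$ set $m_1\eqdef m^*$, and otherwise let $m_1$ be any element of $\calM\setminus\set{m_0}$. The cloning channel is $\calN_{A\to BC}(\cdot)\eqdef V(\cdot)V^\dagger$, which is a legitimate (isometric, hence trace-preserving) channel since $\braket{V\phi}{V\psi}=\braket{\phi}{\psi}$, using $\ket{\bot}\perp A$.

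The first thing to verify is that the hypothesis of Lemma~\ref{lm:pg-lower-bound} holds for every key, i.e.\ that $\Enc_k(m_0)\,\Enc_k(m_1)=0$. This is exactly where perfect correctness (Definition~\ref{def:correct}) enters: decryption is a POVM $\set{D_m^k}_{m\in\calM}$ with $\tr{D_m^k\,\Enc_k(m)}=1$ for every $m$. Since $0\le D_m^k\le\one$ and each $\Enc_k(m)$ has unit trace, $D_{m_0}^k$ must act as the identity on $\supp(\Enc_k(m_0))$; because $\sum_m D_m^k=\one$, this forces $D_{m_1}^k$ to vanish on that support, whereas $D_{m_1}^k$ equals the identity on $\supp(\Enc_k(m_1))$. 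The two supports are therefore orthogonal and $\Enc_k(m_0)\,\Enc_k(m_1)=0$.

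With applicability established, for each key $k$ I would take Bob's and Charlie's POVMs to be the pair $\set{\Pi,\one-\Pi}$ constructed in the proof of Lemma~\ref{lm:pg-lower-bound}, built around the top eigenvector of whichever of $\Enc_k(m_0),\Enc_k(m_1)$ has the larger maximum eigenvalue, assigning the $\Pi$-outcome to the corresponding message bit. Since Bob and Charlie both know $k$, these key-dependent measurements are admissible, and by the symmetry of the lemma's construction under interchanging the two states the per-key winning probability is at least $\tfrac12+\tfrac{1}{16}\max(\lambda_{\max}(\Enc_k(m_0)),\lambda_{\max}(\Enc_k(m_1)))$. As $\pwinind{\calE}{\calA}$ is precisely the average over $k\leftarrow\KeyGen$ of these per-key probabilities, the inequality survives taking $\E[k]{\cdot}$. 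Finally, because $m^*\in\set{m_0,m_1}$, pointwise in $k$ the maximum is at least $\lambda_{\max}(\Enc_k(m^*))$, so $\E[k]{\max(\cdots)}\ge\E[k]{\lambda_{\max}(\Enc_k(m^*))}=\max_{m}\E[k]{\lambda_{\max}(\Enc_k(m))}$, which yields the claimed bound.

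I expect the main subtlety to be the bookkeeping around the bit labels. Lemma~\ref{lm:pg-lower-bound} places the state whose top eigenvector $\Pi$ is built around into the $X=0$ slot and outputs $\lambda_{\max}$ of \emph{that} state; to capture $\max_m\E[k]{\lambda_{\max}(\Enc_k(m))}$ rather than the eigenvalue of a single fixed message, I must let the measurements and the $\Pi$/$\one-\Pi$-to-bit assignment depend on $k$, and argue that this is permitted by the cloning-indistinguishability attack definition (which indeed fixes $\set{P_b^k},\set{Q_b^k}$ per key). The orthogonality deduction from correctness is routine but must be spelled out, as it is the precise hypothesis $\rho\sigma=0$ required by the lemma.
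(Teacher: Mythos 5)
Your proposal is correct and follows essentially the same route as the paper: instantiate Lemma~\ref{lm:pg-lower-bound} per key with $\rho=\Enc_k(m_0)$, $\sigma=\Enc_k(m_1)$, use $V_{A\to BC}$ as the cloning channel and the key-dependent $\{\Pi^k,\one-\Pi^k\}$ as both parties' POVMs; your choice of $m_1$ (forcing a global maximizer $m^*$ into $\set{m_0,m_1}$) is equivalent to the paper's $m_1\eqdef\argmax_{m\in\calM\setminus\{m_0\}}\E[k\leftarrow\KeyGen]{\lambda_{\max}(\Enc_k(m))}$. The only difference is that you spell out steps the paper leaves implicit — correctness implying $\Enc_k(m_0)\Enc_k(m_1)=0$, the pointwise-in-$k$ max before averaging, and the key-dependent assignment of $\Pi$ to a message bit — all of which are handled correctly.
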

\begin{proof}
  Let $m_1 \eqdef \argmax_{m \in \calM \setminus \{m_0\}}  \E[k\leftarrow \KeyGen]{ \lambda_{\max}(\Enc_k(m)) }$.
We consider the unclonable-indistinguishable attack $(m_1, V_{A \to BC}, \{ \Pi^k, \one-\Pi^k \}, \{ \Pi^k, \one-\Pi^k \})$ where the projector $\Pi^k$ is $\Pi$ defined in the proof of Lemma~\ref{lm:pg-lower-bound} for $\rho= \Enc_k(m_0)$ and $\sigma = \Enc_k(m_1)$. The claim then follows directly from the lemma.
\qed \end{proof}

\section{An Optimal Scheme} \label{sec:optimalscheme}
In this section, we provide a partial answer to the following question: For QECM schemes with finite message set $\calM = [M]$ and a $d$-dimensional ciphertext space $A$, which QECM scheme $\calE = (\KeyGen, \Enc, \Dec)$ minimizes $\pwinunifs{\calE}$?
It turns out that the best QECM schemes in terms of uncloneable security are of a simple form, formally defined in Definition~\ref{def:optimal-scheme}. Intuitively, the optimal scheme maps every classical message to a completely mixed state over a subspace of $A$, so that different messages are mapped to states with disjoint support to ensure correctness. The key additionally specifies a Haar-random unitary to hide the message.

For simplicity of the proof, we work with Haar-random unitaries in this section, which is a continuous distribution and results in infinite-sized keys. In practice, one would want to work with finite key sizes and pick the unitary from a suitably chosen two-design instead.

\begin{definition}
\label{def:optimal-scheme}
Fix an orthonormal basis $a = (\ket{a_0}, \cdots, \ket{a_{d-1}})$ for $A$ and a random variable $T$ taking values in $\calT  \eqdef \set{(t_0, \cdots, t_{M-1}) \in \N^M: \sum_{m\in[M]} t_m = d}$. We define the \emph{Haar measure-based QECM} scheme $\haarscheme{T}{M, d}= (\widetilde{\KeyGen}, \widetilde{\Enc}, \widetilde{\Dec})$ as follows. The key generation $\widetilde{\KeyGen}$ outputs a pair $K=(T, U)$ where  $U$ is a random unitary over $A$ distributed according to the Haar measure and independent of $T$. Furthermore, we define
\needspace{\begin{align}
		\widetilde{\Enc}_{((t_0, \cdots, t_{M-1}),u)}(m) \eqdef u \pr{\frac{1}{t_m} {\sum}_{i=\sum_{j=0}^{m-1}t_j-1}^{\sum_{j=0}^{m}t_j} \kb{a_i}} u^\dagger \, .
	\end{align}}
{\begin{align}
    \widetilde{\Enc}_{((t_0, \cdots, t_{M-1}),u)}(m) \eqdef u \pr{\frac{1}{t_m} \sum_{i=\sum_{j=0}^{m-1}t_j - 1}^{\sum_{j=0}^{m}t_j - 1} \kb{a_i}} u^\dagger \, .
\end{align}}
Decryption $\widetilde{\Dec}$ is defined by applying $u^\dagger$, measuring in basis $a$ and identifying the message $m$. Note that the choice of the orthonormal basis $a$ does not affect the performance of the protocol by the invariance of Haar measure, and therefore, we drop $a$ from our notation.\\
Furthermore, when $d = LM$ for an integer $L$ and $T = (L, \cdots, L)$ with probability one, we denote $\haarscheme{T}{M, d}$ by $\haarschemeu{M, d}$. 
\end{definition}

The following theorem shows that QECM schemes of the form above are optimal in terms of uncloneable security for uniform messages. \begin{theorem}
\label{th:optimal-scheme}
For any correct (see Definition~\ref{def:correct}) QECM scheme $\calE = (\KeyGen, \Enc, \Dec)$, it holds that
\begin{align}\label{eq:opt-scheme}
    \pwinunifs{\calE} \geq \inf_T \pwinunifs{\haarscheme{T}{M,d}} \geq \pwinunifs{\haarschemeu{M, M(d-M+1)}},
\end{align}
where the infimum is taken over all random variable $T$ that are permutation invariant, i.e. $\P{T = (t_1, \cdots, t_M)} = \P{T = (t_{\sigma(1)}, \cdots, t_{\sigma(M)})}$ for all $(t_1, \cdots, t_M) \in \calT$ and all permutations $\sigma$.
We also have for all $1\leq M' \geq M-1$,
\begin{align}
    \pwinunifs{\calE} \geq \frac{M'}{M}\pwinunifs{\haarschemeu{M', \max\pr{d, \frac{M'd}{M-M'}}}}
\end{align}
\end{theorem}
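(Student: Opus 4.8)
The plan is to reduce an arbitrary correct scheme $\calE$ to a Haar measure-based scheme by a chain of transformations, each of which provably does not increase the optimal winning probability $\pwinunifs{\cdot}$. Two structural facts drive everything. First, perfect correctness forces the ciphertexts $\set{\Enc_k(m)}_m$ to have pairwise orthogonal supports for each key $k$ (otherwise $\Dec_k$ could not succeed with probability $1$), so $\Enc_k(m)$ is supported on a subspace of some dimension $t_m^k$ with $\sum_m t_m^k \le d$; padding, we may assume equality, and we let $T$ denote the random profile $(t_m^k)_m$ induced by $\KeyGen$. Second, for every fixed attack $\calA$ the quantity $\pwinunif{\calE}{\calA}$ is affine in the tuple of ciphertext states, so $\pwinunifs{\calE}=\sup_\calA \pwinunif{\calE}{\calA}$ is convex in that tuple; hence for any scheme written as a mixture $\calE_{\mathrm{mix}}=\sum_\xi q_\xi \calE_\xi$ (states averaged, mixing index $\xi$ \emph{hidden} from all parties) we get the Jensen bound $\pwinunifs{\calE_{\mathrm{mix}}}\le \sum_\xi q_\xi \pwinunifs{\calE_\xi}$. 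Finally, because the message distribution is uniform, relabeling messages leaves $\pwinunifs{\cdot}$ invariant.

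First I would apply a \emph{global Haar twirl}: let $\td\calE$ have key $(k,U)$ with $U$ Haar and $\td\Enc_{(k,U)}(m)=U\,\Enc_k(m)\,U^\dagger$. Here convexity is \emph{not} directly available, since $U$ is revealed to Bob and Charlie; instead I use a broadcast trick. Given any $\td\calE$-attack $(\td\calN, \set{P_m^{k,U}},\set{Q_m^{k,U}})$, build a $\calE$-attack whose cloner, on input $\eta$, samples $U$ Haar, runs $\td\calN(U\eta U^\dagger)$, and writes a classical copy of $U$ into \emph{both} output registers; Bob and Charlie read $U$ and apply $P_m^{k,U}$, $Q_m^{k,U}$. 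This is a valid attack of exactly the same value, so $\pwinunifs{\calE}\ge\pwinunifs{\td\calE}$. In the twirled picture the support of $\td\Enc_{(k,U)}(m)$ is $U\cdot\supp{\Enc_k(m)}$, a uniformly random subspace of dimension $t_m^k$, and the joint law of these orthogonal subspaces depends on $k$ only through the profile $T$; once the ciphertexts are also maximally mixed on their supports (the step below), this law is exactly that of $\haarscheme{T}{M,d}$. Symmetrizing $T$ over permutations — a hidden mixing of relabeled schemes, each of equal value — and invoking the Jensen bound then lands in a permutation-invariant profile of no larger value, yielding $\pwinunifs{\calE}\ge\inf_T\pwinunifs{\haarscheme{T}{M,d}}$.

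The crux is the maximal-mixing step: replacing each $\Enc_k(m)$ by the maximally mixed state $\tilde\rho_m^k$ on its support must not increase $\pwinunifs{\cdot}$. The naive route — writing $\tilde\rho_m^k$ as a mixture $\sum_i p_i W_i \Enc_k(m) W_i^\dagger$ of unitary conjugations supported on $\supp{\Enc_k(m)}$ and applying convexity — breaks down, because these $W_i$ are \emph{key-dependent} (the supports depend on $k$) and a key-blind cloner cannot undo them; one can indeed check that a key-dependent unitary applied before cloning genuinely changes the optimal value, so the broadcast trick is unavailable here. The approach I expect to work is to carry out this step \emph{inside} the Haar-twirled picture: by Haar invariance the optimal cloner may be taken covariant, and integrating over $U$ via Schur's lemma and the Weingarten calculus collapses $\pwinunifs{\td\calE}$ to a functional of the \emph{spectra} of the ciphertexts alone. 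One then shows this functional is Schur-concave, so that it is minimized over all spectra with fixed support dimension by the flat spectrum, i.e.\ by $\tilde\rho_m^k$. This spectral monotonicity computation is the main obstacle and the technical heart of the argument.

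For the second inequality I would establish that $\pwinunifs{\haarschemeu{M,L}}$ is nonincreasing in the block size $L$ (enlarging the supports only mixes the ciphertexts further, which by the same spectral reasoning lowers the value) and that, among all dimension-$d$ profiles, the value is smallest for the balanced profile. Comparing the balanced dimension-$d$ scheme to the uniform scheme $\haarschemeu{M,M(d-M+1)}$, whose blocks of size $d-M+1$ are at least as large, then gives $\inf_T\pwinunifs{\haarscheme{T}{M,d}}\ge\pwinunifs{\haarschemeu{M,M(d-M+1)}}$. For the last inequality I would restrict the attack to a sub-collection of $M'$ messages with the largest supports: since the messages are uniform, only a fraction $M'/M$ of the probability mass is retained, producing the prefactor $M'/M$, while a pigeonhole bound on the retained block sizes (forcing the dimension $\max\pr{d,\frac{M'd}{M-M'}}$) lets me apply the first two inequalities to the resulting uniform sub-scheme $\haarschemeu{M', \max\pr{d, \frac{M'd}{M-M'}}}$.
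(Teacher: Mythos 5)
Your high-level architecture (Haar twirl, dimension extension, expurgation with the $M'/M$ loss) mirrors the paper's proof, and both your broadcast trick for the twirl and the symmetrization of $T$ are sound. But the flattening step --- replacing each $\Enc_k(m)$ by the maximally mixed state on its support --- is exactly where your proposal stops being a proof: the covariance/Schur/Weingarten/Schur-concavity program you outline is never carried out (you yourself call it ``the main obstacle''), and establishing Schur-concavity of a quantity that still contains a supremum over cloning channels and pairs of POVMs is not a routine computation. Worse, the ``naive route'' you dismiss is, with one twist, precisely how the paper closes this step, and the twist is available to you because you are already inside the twirled picture. For fixed $k$, let $V_\pi$ permute the eigenvectors of the ciphertexts within each support (these are exactly the key-dependent $W_i$ you worry about); write $\calM_u(\rho)=u\rho u^\dagger$, and let $\Delta$ denote the optimal simultaneous-guessing value of a map from messages to bipartite states (a supremum of affine functionals, hence convex). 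Right-invariance of the Haar measure gives, for every fixed $\pi$,
\begin{align}
\E[U]{\Delta\pr{\calN \circ \calM_U \circ \Enc_k}} = \E[U]{\Delta\pr{\calN \circ \calM_U \circ \calM_{V_\pi} \circ \Enc_k}},
\end{align}
so the key-dependent unitary is simply absorbed into the Haar component of the key: nothing has to be ``undone'' by a key-blind cloner, and Bob's and Charlie's knowledge of $(k,U)$ is untouched. Averaging over uniformly random $\pi$ and applying Jensen to the convex $\Delta$ pushes the $\pi$-average onto the states, and $\E[\pi]{\calM_{V_\pi}\pr{\Enc_k(m)}}$ is exactly the flat state on the support. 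These few lines replace your entire spectral program.

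Your route to the second inequality of Eq.~\eqref{eq:opt-scheme} has a second, independent gap: it invokes (i) that among all dimension-$d$ profiles the balanced one minimizes the value, and (ii) monotonicity of $\pwinunifs{\haarschemeu{M,L}}$ in the block size $L$. Claim (i) is essentially Conjecture~\ref{conj:optimal}, which the paper explicitly leaves \emph{open}, and claim (ii) rests on the same unproven spectral reasoning. The paper sidesteps both: every correct scheme on a $d$-dimensional space satisfies $\rank{\Enc_k(m)} \leq d-M+1$ (the other $M-1$ ciphertexts need orthogonal support), so after an isometric embedding into dimension $M(d-M+1)$ --- which leaves $\pwinunifs{\cdot}$ unchanged --- the uniformization can be run with the deterministic profile $(d-M+1,\cdots,d-M+1)$; applying this to $\calE = \haarscheme{T}{M,d}$ itself yields the second inequality for every $T$, with no optimality or monotonicity claims needed. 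Finally, a small slip in your expurgation step: you must keep the $M'$ messages with the \emph{smallest} supports (discarding the $M-M'$ largest), since pigeonhole then bounds each retained rank by $d/(M-M')$; keeping the largest supports gives no such bound.
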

To prove the above theorem, we shall introduce four modifications of an arbitrary correct QECM scheme and discuss how they affect uncloneable security.
\begin{enumerate}
    \item \textbf{``Uniformization'' w.r.t. Haar measure:} We associate to any correct scheme $\calE$ a scheme $\haarscheme{T}{M, d}$ as specified in Definition~\ref{def:optimal-scheme}, for which we only need to define random variable $T$. For a fixed key value $k$, we choose $t = (t_0, \cdots, t_{M-1})\in \calT$ such that $t_m\geq \rank{\Enc_k(m)}$ for all. \emph{Note that when $\sum_{m\in [M]} \rank{\Enc_k(m)} < d $, there is no unique such choice of $(t_0, \cdots, t_{M-1})$. Then, we choose $t$ arbitrarily among all valid choices.} Generating the key $k$ according to $\KeyGen$ defines a random variable $T$.  We further define another random variable  $\widetilde{T}$ as 
\begin{align}
    \P{\widetilde{T} = (t_0, \cdots, t_{M-1})} = \frac{1}{M!}\sum_{\sigma: \text{permutation of } [M]} \P{T = (t_{\sigma(0)}, \cdots, t_{\sigma(M-1)})}.
\end{align}
We then have the following result about the uncloneable security of $\calE$, $\haarscheme{T}{M, d}$, and $\haarscheme{\widetilde{T}}{M, d}$.
\begin{lemma}
\label{lm:unif-perf}
Random variable $\widetilde{T}$ is permutation invariant and
\begin{align}
    \pwinunifs{\calE} \geq \pwinunifs{\haarscheme{T}{M, d}}=\pwinunifs{\haarscheme{\widetilde{T}}{M, d}}
\end{align}
\end{lemma}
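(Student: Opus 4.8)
The plan is to establish the three assertions of Lemma~\ref{lm:unif-perf} in turn: permutation invariance of $\widetilde{T}$, the inequality $\pwinunifs{\calE} \geq \pwinunifs{\haarscheme{T}{M,d}}$, and the equality $\pwinunifs{\haarscheme{T}{M,d}} = \pwinunifs{\haarscheme{\widetilde{T}}{M,d}}$. Permutation invariance is immediate from the definition of $\widetilde{T}$: for any permutation $\pi$ the reindexing $\sigma\mapsto \pi\circ\sigma$ is a bijection of the symmetric group, so the averaged law $\P{\widetilde{T}=(t_0,\dots,t_{M-1})}$ is unchanged when its entries are relabelled. Throughout I will also use freely that shared classical randomness between Eve, Bob and Charlie is without loss of generality, since a convex mixture of cloning attacks is again a cloning attack (Eve may broadcast a classical label into both $B$ and $C$).

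For the inequality, the approach is to convert any cloning attack $(\tilde\calN,\{\tilde P_m^{(t,u)}\},\{\tilde Q_m^{(t,u)}\})$ against $\haarscheme{T}{M,d}$ into a cloning attack against $\calE$ of at least the same success probability, and then take suprema. Perfect correctness (Definition~\ref{def:correct}) forces $\Dec_k$ to distinguish $\{\Enc_k(m)\}_m$, so for each key $k$ these states have pairwise orthogonal supports and $\rank{\Enc_k(m)}\leq t_m(k)$ by the choice of $T$. Hence there is a unitary $u_k$ depending only on $k$ that maps the $m$-th basis block $S_m$ (the index set of size $t_m$ assigned to $m$ in Definition~\ref{def:optimal-scheme}) onto a space containing $\supp{\Enc_k(m)}$, simultaneously for all $m$; write $\hat\rho_{k,m}\eqdef u_k^\dagger \Enc_k(m) u_k$, a state supported inside $S_m$. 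The constructed attack $\calA$ against $\calE$ uses a shared Haar-random $v\in\Unit{A}$: Eve applies $X\mapsto\tilde\calN(vXv^\dagger)$, while Bob and Charlie, knowing $k$ and $v$, run the Haar-scheme POVMs with key $(t(k),vu_k)$. Its value is
\begin{align*}
\pwinunif{\calE}{\calA} = \frac{1}{\card{\calM}}\sum_{m} \E[k\leftarrow\KeyGen]{\E[v]{\tr{\pr{\tilde P_m^{(t(k),vu_k)}\otimes \tilde Q_m^{(t(k),vu_k)}}\tilde\calN\pr{v\,\Enc_k(m)\,v^\dagger}}}}.
\end{align*}
Applying the Haar-invariant substitution $v\mapsto vu_k^\dagger$ (valid for each fixed $k$) turns this into exactly the Haar-scheme expression, except that the maximally mixed block state $\widetilde{\Enc}_{(t(k),\one)}(m)$ is replaced by $\hat\rho_{k,m}$.

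Closing this last gap is the crux of the argument. Both $\hat\rho_{k,m}$ and $\widetilde{\Enc}_{(t(k),\one)}(m)$ are supported in the fixed block $S_m$, but they differ in their within-block eigenvalue profile, and Eve cannot flatten $\hat\rho_{k,m}$ to the maximally mixed state herself, since she does not know $k$ and therefore cannot align the supports. The resolution is to push this flattening into the \emph{gauge symmetry} of the Haar scheme: every $g$ in the block-diagonal group $G\eqdef\set{g\in\Unit{A}:g \text{ preserves each }S_m}\cong\prod_m\Unit{S_m}$ fixes each maximally mixed block state, so the Haar ciphertexts are invariant under $u\mapsto ug$. Consequently the given Haar-scheme attack may be assumed, without loss of generality, to be symmetrized over $g\in G$ (averaging its POVMs over a shared $g$ leaves the value unchanged, as a substitution $u\mapsto ug^\dagger$ shows the ciphertexts do not move). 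For such a symmetrized attack the superscripts satisfy $\tilde P_m^{(t,vg)}=\tilde P_m^{(t,v)}$, so a further substitution $v\mapsto vg$ moves $g$ onto $\hat\rho_{k,m}$ only, and averaging over $g\in G$ twirls $\hat\rho_{k,m}$ within $S_m$ into precisely $\widetilde{\Enc}_{(t(k),\one)}(m)$. Thus $\calA$ attains the Haar-scheme value, and the inequality follows. I expect this interplay — that the within-support spectrum of $\Enc_k(m)$ can be flattened ``for free'' using the gauge freedom of the key that Bob and Charlie (but not Eve) control — to be the main obstacle and the place requiring the most care.

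For the equality, I will argue that $\widetilde{T}$ differs from $T$ only by a uniformly random relabelling of the messages, and that such a relabelling is a symmetry of the uncloneable-security game for uniform messages. Sampling $\widetilde{T}$ amounts to sampling $T$ together with an independent uniform permutation $\pi$ and permuting the block sizes; since the ordering of the blocks along the basis $a$ is immaterial (any reordering is absorbed into the Haar unitary $U$), encrypting $m$ under $\haarscheme{\widetilde{T}}{M,d}$ is equivalent in distribution to encrypting $\pi^{-1}(m)$ under $\haarscheme{T}{M,d}$ with a correspondingly relabelled key. Because the cloner never sees the message and therefore acts identically for all $m$, while Bob and Charlie learn the realized block-size vector as part of the key and can undo the relabelling, an optimal attack against one scheme yields an equally good attack against the other; the reverse direction simulates the relabelling with shared randomness. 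Matching attacks in both directions gives $\pwinunifs{\haarscheme{T}{M,d}} = \pwinunifs{\haarscheme{\widetilde{T}}{M,d}}$, which completes the proof.
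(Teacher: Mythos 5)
Your proof is correct and is essentially the paper's own argument recast in its dual, attack-simulation form: the paper's chain for the inequality (compose the cloning channel with a Haar unitary and exchange supremum with expectation, then flatten each ciphertext's spectrum by averaging over within-block permutations of eigenvectors, using convexity of its guessing functional $\Delta$) is precisely your construction of an $\calE$-attack from a Haar-scheme attack via a shared Haar unitary $v$, the alignment unitaries $u_k$, and the twirl over the block-diagonal gauge group $G$, with shared randomness playing the role of the paper's sup/expectation exchange and your POVM-symmetrization playing the role of its convexity step. Your handling of the equality for $\widetilde{T}$ by message relabelling (with block reorderings absorbed into the Haar unitary) likewise matches the paper's use of the invariance $\Delta(g\circ\sigma)=\Delta(g)$, so the two proofs differ only in packaging, not in substance.
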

    We provide the full proof in Appendix~\ref{sec:unif-perf-proof}. The sketch of the proof has the following two steps. First, for the QECM $\calE$, we augment the key with a random unitary $U$ and apply it to the ciphertext in the encryption procedure. This still results in a correct QECM with improved uncloneable security. When $U$ is distributed according to Haar measure, the basis that ciphertexts were initially encoded in will be forgotten, and only their spectrum matters. Second, we apply a random permutation of the eigenvectors of each ciphertext. This extra randomness is not necessary to decrypt the message but makes the spectrum of each ciphertext flat.
\item \textbf{Extension of ciphertext space:} Let $A'$ be any Hilbert space with $|A'| \geq |A|$ and $V_{A\to A'}$ be an arbitrary isometric. We define a new scheme $\calE' = (\KeyGen', \Enc', \Dec')$ with the same message size $M$ 
\begin{align}
    \KeyGen' &= \KeyGen\\
    \Enc'_k(m) &= V \Enc_k(m) V^\dagger\\
    \Dec'_k(\rho) &= \Dec_k(V^\dagger\rho V).
\end{align}
$\calE'$ is correct and we have $\pwinunifs{\calE} = \pwinunifs{\calE'}$, but as the ciphertext space  has larger dimension, we have more flexibility to choose $T$ in the first modification. In particular, if we have $\rank{\Enc_k(m)} \leq r$ for all $m$ and $k$, by choosing $A'$ of dimension $rM$, we can set $t$ to be (independently of $k$) $(r, \cdots, r)$ for $\calE'$.
\item \textbf{Expurgation of ciphertexts:} Let $M' \leq M$ and $\phi_k: [M']\to [M]$ be an injective function for all $k$. Define $\calE' = (\KeyGen', \Enc', \Dec')$ with message size $M'$ as 
\begin{align}
\KeyGen' &= \KeyGen\\
\Enc'_k(m) &= \Enc_k(\phi_k(m))\\
\Dec'_k(\rho) &= \phi_k^{-1}(\Dec_k(\rho)).
\end{align}
Note that 1) $\calE'$ is correct. 2) $ \frac{M'}{M} \pwinunifs{\calE'} \leq \pwinunifs{\calE} \leq \pwinunifs{\calE'} $ 3) For any $M' \leq M$, there exists a choice of functions $\phi_k$ such that $\rank{\Enc'_k(m)} \leq \frac{d}{M - M'}$ for all $k$ and all $m \in [M']$.
\end{enumerate}

\begin{proof}
We refer here to above modifications as \emph{first, second, and third modification}, respectively. According to first modification and Lemma~\ref{lm:unif-perf}, there exists permutation invariant random variable $T$ such that 
\begin{align}
    \pwinunifs{\calE} \geq \pwinunifs{\haarscheme{T}{M, d}}.
\end{align}
Next we perform the second modification for Hilbert space $A'$ with $|A'| = M(d - M + 1)$ to obtain a new scheme $\calE'$. For this new scheme, we can choose $T = (d-M+1, \cdots, d-M+1)$ as $\rank{\Enc'_k(m)} = \rank{\Enc_k(m)} \leq d - M + 1$ for all $k$ and $m$ and all correct QECMs. Therefore,
\begin{align}
    \pwinunifs{\calE} = \pwinunifs{\calE'} \geq \pwinunifs{\haarschemeu{M, M(d-M+1)}},
\end{align}
which completes the proof of Eq.~\eqref{eq:opt-scheme}.

We next apply the third modification with arbitrary $M'$ to obtain a scheme $\calE' = (\KeyGen', \Enc', \Dec')$ and choose $\phi_k$ such that $\rank{\Enc'_k(m)} \leq \frac{d}{M - M'}$ for all $k$ and $m\in[M']$. We next apply the second modification to $\calE'$ to obtain another scheme $\calE'' = (\KeyGen'', \Enc'', \Dec'')$ for a Hilbert space $A'$ with $|A'| = \max\pr{d, \frac{M'd}{M-M'}}$. In the first modification of $\calE''$, we can choose $T = (\frac{d}{M - M'}, \cdots, \frac{d}{M - M'})$ because $\rank{\Enc''_k(m)} = \rank{\Enc'_k(m)} \leq \frac{d}{M - M'}$. Therefore,
\begin{align}
    \pwinunifs{\calE} \geq \frac{M'}{M}\pwinunifs{\calE'} = \frac{M'}{M}\pwinunifs{\calE''} \geq \frac{M'}{M}\pwinunifs{\haarschemeu{M', \max\pr{d, \frac{M'd}{M-M'}}}},
\end{align}
as claimed in Theorem~\ref{th:optimal-scheme}.
\qed \end{proof}
We remark that if  the decryption algorithm, which can be specified by a POVM on the ciphertext space, is projective, but the QECM scheme is only \emph{approximately correct}, Equation \eqref{eq:opt-scheme} can be proven as well, up to an additive error proportional to the square root of the correctness error. This is the case because such a scheme can be made perfectly correct by modifying the encryption algorithm to  prepare the post-measurement state of a successful decryption measurement instead of the original ciphertext, which only differs from the original encryption algorithm up to a small error bounded by the gentle measurement lemma  \cite{Winter99}.

Theorem~\ref{th:optimal-scheme} results in equivalent conditions for obtaining optimal asymptotic unconeable security, summarized in the following two corollaries.

\begin{corollary}
For any fixed $M$, the following statements are equivalent.
\begin{enumerate}
    \item There exists a sequence $\set{\calE_\lambda}_{\lambda \geq 1}$ of QECMs with message size $M$ such that $\lim_{\lambda\to \infty}\pwinunifs{\calE_\lambda} = \frac{1}{M}$.
    \item $\lim_{d\to \infty} \pwinunifs{\haarschemeu{M, d}} = \frac{1}{M}$.
\end{enumerate}
\end{corollary}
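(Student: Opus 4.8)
The plan is to prove the two implications separately. The reverse direction is immediate, while the forward direction reduces, via Theorem~\ref{th:optimal-scheme}, to controlling a subsequence of dimensions; upgrading this to the full limit requires a monotonicity property of the Haar scheme that I would extract from the modifications used to prove that theorem. Throughout, write $g(d) \eqdef \pwinunifs{\haarschemeu{M,d}}$, defined for $d$ a multiple of $M$ with $d \geq M$, and recall the trivial lower bound $\pwinunifs{\calE} \geq \frac{1}{M}$ valid for every correct QECM (the adversaries guess a fixed message).

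For the implication $2 \Rightarrow 1$, I would simply exhibit the sequence itself: set $\calE_\lambda \eqdef \haarschemeu{M, \lambda M}$. Each $\calE_\lambda$ is a legitimate (if continuously keyed) correct QECM with message size $M$, and since $\lambda M \to \infty$, statement 2 gives $\pwinunifs{\calE_\lambda} = g(\lambda M) \to \frac{1}{M}$, which is statement 1. For the substantive implication $1 \Rightarrow 2$, take a sequence $\set{\calE_\lambda}$ with $\pwinunifs{\calE_\lambda} \to \frac{1}{M}$ and let $d_\lambda$ be the ciphertext dimension of $\calE_\lambda$ (necessarily $d_\lambda \geq M$ by correctness, since $M$ perfectly distinguishable ciphertexts need mutually orthogonal supports). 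Applying the first chain of inequalities in Theorem~\ref{th:optimal-scheme} to each $\calE_\lambda$ yields
\begin{align}
    \frac{1}{M} \leq g\pr{M(d_\lambda - M + 1)} \leq \pwinunifs{\calE_\lambda},
\end{align}
and since the right-hand side tends to $\frac{1}{M}$, the values $g\pr{M(d_\lambda - M + 1)}$ approach $\frac{1}{M}$. As each $M(d_\lambda-M+1)$ is an admissible dimension, this already forces $\inf_d g(d) = \frac{1}{M}$. However, this only pins down a subsequence, so to obtain the full limit $\lim_{d\to\infty} g(d) = \frac{1}{M}$ I need that $g$ is non-increasing.

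The key step, which I expect to be the main obstacle, is this monotonicity: $g(d + M) \leq g(d)$. I would obtain it by composing two of the modifications underlying Theorem~\ref{th:optimal-scheme}. Start from $\haarschemeu{M,d}$, with block size $L = d/M$ and every ciphertext of rank $L$. The \emph{extension of ciphertext space} modification, using an isometry into a space of dimension $d + M = (L+1)M$, produces a scheme $\calE'$ with $\pwinunifs{\calE'} = g(d)$ and unchanged ciphertext ranks $L$. Now apply the \emph{uniformization} modification (Lemma~\ref{lm:unif-perf}) to $\calE'$: since $\sum_m (L+1) = (L+1)M$ equals the new dimension and $L+1 \geq L = \rank{\Enc'_k(m)}$ for every $k$ and $m$, the deterministic choice $T = (L+1, \ldots, L+1)$ is admissible, and the resulting scheme $\haarscheme{T}{M, (L+1)M}$ is by definition exactly $\haarschemeu{M, d+M}$. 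Lemma~\ref{lm:unif-perf} then gives $g(d) = \pwinunifs{\calE'} \geq \pwinunifs{\haarschemeu{M, d+M}} = g(d+M)$.

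With monotonicity established, $g$ is non-increasing and bounded below by $\frac{1}{M}$, so $\lim_{d\to\infty} g(d) = \inf_d g(d)$, and the subsequence computation above identifies this infimum as $\frac{1}{M}$; this is statement 2 and completes the proof. The only points demanding care are verifying that the admissible block choice in the uniformization step lands precisely on the \emph{equal-block} scheme $\haarschemeu{M, d+M}$, and that the isometric extension preserves the rank profile; both are routine given the explicit form of the modifications.
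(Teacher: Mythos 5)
Your proof is correct and follows the route the paper intends: the paper states this corollary without an explicit proof, as a direct consequence of Theorem~\ref{th:optimal-scheme}, and your implication $1 \Rightarrow 2$ invokes exactly that theorem, while $2 \Rightarrow 1$ is the same trivial exhibition of the Haar schemes themselves. The one genuine addition is your monotonicity lemma $\pwinunifs{\haarschemeu{M,d+M}} \leq \pwinunifs{\haarschemeu{M,d}}$, which is indeed needed to upgrade convergence along the subsequence of dimensions $M(d_\lambda - M + 1)$ to the full limit over $d$; the paper leaves this step entirely implicit. Your derivation of it---isometrically extending the ciphertext space from $LM$ to $(L+1)M$ and then applying Lemma~\ref{lm:unif-perf} with the deterministic admissible choice $T=(L+1,\ldots,L+1)$, which by definition lands on $\haarschemeu{M,d+M}$---is precisely the composition of the second and first modifications that the paper itself uses inside the proof of Theorem~\ref{th:optimal-scheme}, so it is sound within the paper's framework, including its acceptance of the continuously keyed Haar scheme as a legitimate correct QECM.
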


\begin{corollary}
The following statements are equivalent.
\begin{enumerate}
    \item There exists a sequence $\set{\calE_\lambda}_{\lambda \geq 1}$ of QECMs  such that $\lim_{\lambda\to \infty}\pwininds{\calE_\lambda} = \frac{1}{2}$.
    \item $\lim_{d\to \infty} \pwinunifs{\haarschemeu{2, d}} = \frac{1}{2}$.
\end{enumerate}
\end{corollary}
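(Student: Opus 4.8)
The plan is to read statement~(2) as the assertion that the nonincreasing limit $\ell \eqdef \inf_{d} \pwinunifs{\haarschemeu{2, d}}$ equals $\frac{1}{2}$, and then show that $\ell = \frac{1}{2}$ is equivalent to~(1). Two preliminary observations set this up. First, for a scheme with exactly two messages the indistinguishability game coincides with the uniform-message game, since the pair $\{m_0, m_1\}$ is forced; hence $\pwininds{\calE} = \pwinunifs{\calE}$ whenever $\card{\calM} = 2$, and in particular $\pwininds{\haarschemeu{2,d}} = \pwinunifs{\haarschemeu{2,d}}$. Second, I would prove that $d \mapsto \pwinunifs{\haarschemeu{2,d}}$ is nonincreasing, so that the limit in~(2) exists and equals $\ell$. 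To get the monotonicity, I start from $\haarschemeu{2,2L}$, whose ciphertexts have rank $L$, apply the extension-of-ciphertext-space modification with $\card{A'} = 2(L+1)$ (which preserves $\pwinunifs{\cdot}$), and then invoke Lemma~\ref{lm:unif-perf} with the deterministic choice $T = (L+1, L+1)$, which is legitimate because $L+1 \geq L = \rank{\Enc_k(m)}$ and $2(L+1) = \card{A'}$. This yields $\pwinunifs{\haarschemeu{2,2L}} \geq \pwinunifs{\haarschemeu{2,2(L+1)}}$. Since $\pwinunifs{\cdot} \geq \frac{1}{2}$ by the trivial guessing strategy, $\ell$ exists and $\ell \geq \frac{1}{2}$.

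The central step is a lower bound saying that every QECM is at least as insecure as the worst Haar scheme: for any correct QECM $\calE$ with ciphertext dimension $d$, it holds that $\pwininds{\calE} \geq \ell$. To prove it, I expurgate $\calE$ down to any two of its messages (the expurgation modification), obtaining a correct two-message scheme $\calE''$ with the same ciphertext dimension $d$. The value of the uniform game for $\calE''$ is precisely the value of the indistinguishability game of $\calE$ for that message pair, so $\pwinunifs{\calE''} \leq \pwininds{\calE}$. Applying Theorem~\ref{th:optimal-scheme} with $M = 2$ gives $\pwinunifs{\calE''} \geq \pwinunifs{\haarschemeu{2, 2(d-1)}} \geq \ell$, and chaining the two inequalities yields $\pwininds{\calE} \geq \ell$. (The lower bound depends only on $d$, so the choice of pair is immaterial.)

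With these pieces the equivalence follows directly. For $(2) \Rightarrow (1)$, take $\calE_\lambda = \haarschemeu{2, 2\lambda}$; by the first observation $\pwininds{\calE_\lambda} = \pwinunifs{\haarschemeu{2,2\lambda}}$, which tends to $\frac{1}{2}$ by statement~(2). For $(1) \Rightarrow (2)$, given schemes $\calE_\lambda$ with $\pwininds{\calE_\lambda} \to \frac{1}{2}$, the lower bound gives $\frac{1}{2} \leq \ell \leq \pwininds{\calE_\lambda} \to \frac{1}{2}$, so $\ell = \frac{1}{2}$, and by the monotonicity $\lim_{d\to\infty}\pwinunifs{\haarschemeu{2,d}} = \ell = \frac{1}{2}$, which is~(2). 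I expect the monotonicity of $\pwinunifs{\haarschemeu{2,d}}$ in $d$ to be the main obstacle: it is exactly what guarantees that the limit in~(2) exists and coincides with the infimum, so that the merely subsequential information extracted from~(1) via Theorem~\ref{th:optimal-scheme} actually pins down the full limit. Everything else is bookkeeping with the modifications appearing in the proof of Theorem~\ref{th:optimal-scheme} together with the identity $\pwininds{\calE} = \pwinunifs{\calE}$ for binary message sets.
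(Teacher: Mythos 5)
Your proof is correct and follows exactly the route the paper intends: the corollary is stated there as an immediate consequence of Theorem~\ref{th:optimal-scheme}, and your argument---identifying the indistinguishability game with the uniform-message game for binary message sets, establishing monotonicity of $d\mapsto\pwinunifs{\haarschemeu{2,d}}$ via the extension modification plus Lemma~\ref{lm:unif-perf} with $T=(L+1,L+1)$, and expurgating an arbitrary correct scheme to a two-message one before invoking Theorem~\ref{th:optimal-scheme}---is precisely the fleshed-out version of that derivation. One small nitpick: the expurgated pair should be chosen to contain the fixed default message $m_0$, since the paper's indistinguishability game is only defined for pairs $\set{m_0,m_1}$, so that $\pwinunifs{\calE''}\leq\pwininds{\calE}$ holds by definition; this costs nothing, as the resulting lower bound $\pwinunifs{\calE''}\geq\pwinunifs{\haarschemeu{2,2(d-1)}}$ depends only on the ciphertext dimension $d$.
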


We leave it as an open problem to characterize the optimal probability distribution of $T$, but we conjecture that a deterministic $T$ that splits the space evenly is optimal.
\begin{conjecture}
\label{conj:optimal}
We have
\begin{align}
\inf_T \pwinunifs{\haarscheme{T}{M,d}} = \pwinunifs{\haarschemeu{M, d}}.
\end{align}
\end{conjecture}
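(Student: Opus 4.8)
The plan is to reduce the minimization over distributions $T$ first to permutation-invariant $T$, then to \emph{deterministic} configurations $t \in \calT$, and finally to prove that the per-configuration optimal winning probability is Schur-convex and hence minimized by the even split $(d/M, \ldots, d/M)$. Write $g(T) \eqdef \pwinunifs{\haarscheme{T}{M,d}}$. For a fixed cloning attack $\calA$, the quantity $\pwinunif{\haarscheme{T}{M,d}}{\calA}$ is an expectation over $t \sim T$, hence affine in the distribution $T$; consequently $g = \sup_\calA \pwinunif{\cdot}{\calA}$ is convex in $T$. Relabeling messages by a permutation $\pi$ turns an attack against $\haarscheme{T}{M,d}$ into an equally good attack against $\haarscheme{\pi T}{M,d}$, so $g$ is symmetric. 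Jensen's inequality applied over the symmetric group then reproduces the reduction of Lemma~\ref{lm:unif-perf} to permutation-invariant $T$, so it suffices to work with such $T$.

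The next step, and the first technical heart, is to show $g(T) = \E[t \sim T]{h(t)}$, where $h(t) \eqdef g(\delta_t)$ is the value of the deterministic scheme $\haarscheme{\delta_t}{M,d}$. The inequality $g(T) \le \E[t \sim T]{h(t)}$ is immediate, since Eve must commit to a single channel while Bob and Charlie may tailor their POVMs to the announced $t$. For the reverse inequality I would prove that a single cloning channel is simultaneously optimal for every configuration: the Haar-randomness of $U$ makes the whole problem covariant under $\rho \mapsto w \rho w^\dagger$, so a standard twirl shows that an optimal Eve channel may be taken covariant (a Werner-type cloner, such as the symmetrization of $V_{A\to BC}$ from \eqref{eq:cloning}), and the goal is to argue that the \emph{same} covariant channel attains $h(t)$ for all $t$. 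Granting this, Eve uses that channel against $\haarscheme{T}{M,d}$ while Bob and Charlie apply the $t$-dependent optimal measurements, giving $\inf_T g(T) = \min_{t\in\calT} h(t)$, because $\inf_T \E[t\sim T]{h(t)}$ is attained at a point mass.

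It then remains to prove that $h$, extended to real vectors $t$ with $\sum_m t_m = d$, is symmetric and Schur-convex, so that the majorization-minimal point $(d/M, \ldots, d/M)$ minimizes it. Symmetry is clear; for Schur-convexity it suffices to check that $h$ does not increase under a Robin--Hood transfer $t \mapsto t'$ moving one unit of dimension from a larger block to a smaller one, i.e.\ that balancing the block sizes can only help the defender. I would attempt this either by deriving a closed form for $h(t)$ from the covariant-cloner analysis above and verifying Schur-convexity of the resulting function of $(t_0, \ldots, t_{M-1})$, or by a direct reduction constructing, from an optimal attack on the less balanced $t$, an attack on the more balanced $t'$ of at least equal value. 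Combining the three steps yields $\inf_T g(T) = h(d/M, \ldots, d/M) = \pwinunifs{\haarschemeu{M,d}}$.

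The main obstacle is that both the simultaneous optimality of one covariant cloner and the Schur-convexity of $h$ concern an optimization over cloning channels \emph{and} joint Bob--Charlie measurements for which no closed form is known; this is precisely why the statement is only conjectured. The balancing inequality $h(t) \ge h(t')$ is the crux: since the two schemes produce ciphertexts of different ranks there is no obvious channel from one attack scenario to the other, so neither a direct simulation nor a simple interpolation is available. A possible fallback is to exploit convexity of $g$ directly---if the linearization of $g$ at $\delta_{(d/M,\ldots,d/M)}$ were constant over $\calT$, then $g(T) \ge g(\delta_{(d/M,\ldots,d/M)})$ for all $T$ would follow at once---but establishing that constancy appears no easier than the Schur-convexity it would replace.
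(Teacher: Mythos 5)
The statement you are addressing is Conjecture~\ref{conj:optimal}: the paper explicitly leaves it open (both in the sentence introducing it and in the Open Problems section), so there is no proof of the paper's to compare against, and the only question is whether your argument stands on its own. It does not. Write $g(T) \eqdef \pwinunifs{\haarscheme{T}{M,d}}$ and let $h(t)$ denote the value for the deterministic choice $T=t$. The steps you actually establish --- affinity of $\pwinunif{\haarscheme{T}{M,d}}{\calA}$ in the law of $T$ for a fixed attack $\calA$, hence convexity of $g$; permutation symmetry; and the upper bound $g(T) \le \E[t\sim T]{h(t)}$ obtained by letting Bob and Charlie tailor their measurements to the announced $t$ --- all point in the trivial direction: they give $\inf_T g(T) \le h(d/M,\ldots,d/M) = \pwinunifs{\haarschemeu{M,d}}$, which is immediate anyway because the deterministic even split is itself an admissible (and permutation-invariant) choice of $T$. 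The entire content of the conjecture is the reverse inequality $g(T) \ge h(d/M,\ldots,d/M)$ for every $T$, and in your plan this rests exclusively on the two claims you do not prove.

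Concretely, the first gap is the direction $g(T) \ge \E[t\sim T]{h(t)}$, which requires a \emph{single} cloning channel that is simultaneously optimal for every $t$ in the support of $T$. The twirling argument you invoke shows only that, for each fixed $t$, some covariant channel attains $h(t)$; covariant channels from $A$ to $BC$ form a multi-parameter family (of which $V_{A\to BC}$ from Eq.~\eqref{eq:cloning} is one point), and nothing excludes that the optimal member varies with the rank profile $t$. If it does vary, then randomizing $T$ genuinely hedges against Eve's forced commitment, and a randomized $T$ could beat every deterministic one --- which is precisely the possibility the conjecture denies, so assuming simultaneous optimality is circular. The second gap is the Schur-convexity of $h$: as you concede, there is no closed form for $h$ (it involves an unresolved optimization over channels and product measurements), and no simulation argument is available between schemes with different rank profiles, so the ``Robin--Hood'' inequality $h(t) \ge h(t')$ is asserted rather than proved; your fallback via constancy of a linearization of $g$ is, by your own admission, no easier. (The balancing intuition is at least consistent with the paper: Corollary~\ref{cor:ind-converse} shows that large ciphertext eigenvalues, i.e., small blocks, help the attacker.) In summary, your proposal is a reasonable reduction of the conjecture to two sharper open questions, and its rigorous portion parallels the symmetrization idea of Lemma~\ref{lm:unif-perf}, but it is not a proof, in accordance with the status of the statement in the paper.
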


\section{Uncloneable Security for Uniformly Distributed Message} \label{sec:uncloneablesecurity}
We prove a lower bound on $\pwinunifs{\calE}$ for QECM schemes $\calE$ whose ciphertexts have small rank.
\begin{theorem}
\label{th:lower-bound-unif}
Let $\calE$  be a  correct (see Definition~\ref{def:correct}) QECM scheme such that the message size is $M$ and the ciphertexts belong to a $d$-dimensional Hilbert space. 
Then there exists an absolute constant $c>0.02285$ such that
    \begin{align}
        \pwinunifs{\calE} \geq c \frac{\log M - 1}{d}.
    \end{align}
\end{theorem}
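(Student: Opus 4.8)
The plan is to break uncloneable security with a single, essentially \emph{classical}, cloning attack, and to show that its success probability already beats the trivial value $1/M$ by a logarithmic factor. The attack is this: the cloner $\calN$ measures the ciphertext in a fixed orthonormal basis $\set{\ket{f_e}}_{e\in[d]}$ of $A$ and copies the classical outcome $e$ into both registers $B$ and $C$; each of Bob and Charlie, knowing the key $k$, then outputs $\hat m(e,k)\eqdef\argmax_{m}\bra{f_e}\Enc_k(m)\ket{f_e}$. Since both parties receive the same $e$ and apply the same decoder, they are simultaneously correct exactly when $\hat m(e,k)=m$, so a direct computation gives $\pwinunif{\calE}{\calA}=\frac1M\E[k]{\sum_e\max_m\bra{f_e}\Enc_k(m)\ket{f_e}}$. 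For each fixed basis this is a legitimate cloning attack, so $\pwinunifs{\calE}$ dominates the average of this quantity over a Haar-random basis; because each $\ket{f_e}$ is then marginally a uniformly random unit vector $\ket f$, that average equals $\frac dM\E[k]{\E[f]{\max_m\bra f\Enc_k(m)\ket f}}$. It therefore suffices to lower bound $\E[f]{\max_m\bra f\Enc_k(m)\ket f}$ for each fixed $k$.

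Next I would extract the structural consequence of perfect correctness: for every key $k$ the decryption POVM distinguishes the ciphertexts with certainty, which forces the supports $S_m^k\eqdef\supp{\Enc_k(m)}$ to be mutually orthogonal, whence $\sum_m\rank{\Enc_k(m)}\le d$ and, in particular, $M\le d$. Writing $\lambda_0^m$ for the largest eigenvalue of $\Enc_k(m)$ and $\ket{a_0^m}$ for a corresponding eigenvector, the family $\set{\ket{a_0^m}}_m$ is orthonormal and $\bra f\Enc_k(m)\ket f\ge\lambda_0^m\,\abs{\braket f{a_0^m}}^2$. Setting $w_m\eqdef\lambda_0^m$ and $X_m\eqdef\abs{\braket f{a_0^m}}^2$, the task reduces to bounding $\E[f]{\max_m w_m X_m}$, where $\set{X_m}$ are the squared magnitudes of $M$ coordinates of a Haar-random unit vector (so $X_m\sim\mathrm{Beta}(1,d-1)$ with $\P{X_m\ge s}=(1-s)^{d-1}$), and the weights satisfy $\sum_m 1/w_m\le d$ because $w_m\ge 1/\rank{\Enc_k(m)}$.

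I would then produce the logarithm by a threshold-plus-second-moment argument. For a level $t$ we have $\set{\max_m w_m X_m\ge t}=\bigcup_m\set{X_m\ge t/w_m}$, and by the Bonferroni inequality together with negative association of the $X_m$ (which gives $\P{X_m\ge a,\,X_{m'}\ge b}\le\P{X_m\ge a}\P{X_{m'}\ge b}$) one obtains $\P{\max_m w_m X_m\ge t}\ge S(t)-\tfrac12 S(t)^2$, where $S(t)\eqdef\sum_m(1-t/w_m)_+^{d-1}$. Since $u\mapsto(1-tu)_+^{d-1}$ is decreasing and convex, $S(t)$ is minimized over all admissible weight profiles (those with $1/w_m\ge 1$ and $\sum_m 1/w_m\le d$) at the \emph{flat} choice $1/w_m=d/M$, by Schur-convexity; at that choice $S(t)=1$ exactly when $t=\frac Md\pr{1-M^{-1/(d-1)}}$. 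Hence for every scheme there is a $t_0\ge\frac Md\pr{1-M^{-1/(d-1)}}$ with $S(t_0)=1$, giving $\P{\max_m w_m X_m\ge t_0}\ge\tfrac12$ and therefore $\E[f]{\max_m w_m X_m}\ge t_0\,\P{\max_m w_m X_m\ge t_0}\ge\tfrac12 t_0$. Substituting back yields $\pwinunifs{\calE}\ge\frac dM\cdot\tfrac12 t_0\ge\tfrac12\pr{1-M^{-1/(d-1)}}$, and the elementary estimate $1-M^{-1/(d-1)}=1-e^{-\ln M/(d-1)}\ge(1-e^{-1})\tfrac{\ln M}{d-1}$, valid since $\ln M\le d-1$, gives the claimed bound $c\,(\log M-1)/d$ with an absolute constant.

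The conceptual heart is the realization that a purely classical measure-and-broadcast strategy already suffices, which converts the simultaneous-guessing problem into a single extreme-value estimate; no genuine quantum cloning is needed, matching the storage-attack motivation in the introduction. The main technical obstacle is the final lower bound, which must hold for ciphertexts of \emph{arbitrary} rank and spectrum: the reduction to the top eigenvalue together with the budget $\sum_m 1/w_m\le d$ is what prevents a high-rank scheme from flattening away all fluctuations, and one must verify that the adversarially worst configuration is the balanced one $1/w_m=d/M$ (this is exactly where Schur-convexity enters) so that the $\log M$ survives. Care is also needed to justify the negative-association / Bonferroni step for the joint law of the coordinates $X_m$; sharpening these estimates—e.g.\ via the exact joint tail $\P{X_m\ge a,\,X_{m'}\ge b}=(1-a-b)^{d-1}$ and a Paley--Zygmund-type inequality—is what pins down the explicit constant $c>0.02285$.
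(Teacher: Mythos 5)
Your proposal is correct, and after the shared opening step it takes a genuinely different route from the paper's proof. Both arguments use exactly the same attack --- measure the ciphertext in a Haar-random basis, broadcast the classical outcome to Bob and Charlie, and let both apply the maximum-likelihood decoder given the key --- and both reduce, by Haar-averaging the basis, to lower bounding $\frac{d}{M}\E[k]{\E[\phi]{\max_m \bra{\phi}\Enc_k(m)\ket{\phi}}}$. From there the paper does not treat arbitrary spectra directly: it first invokes the second part of Theorem~\ref{th:optimal-scheme} (expurgation plus Haar uniformization) to replace $\calE$ by $\haarschemeu{M/2, d}$, whose ciphertexts are flat projectors, at the cost of the factor $\frac{1}{2}$ and the ``$-1$'' in $\log M - 1$; it then writes the Haar-random vector in terms of complex Gaussians so that $\max_m \norm{\Pi_k(m)\ket{\psi}}^2$ becomes $\max_m X_m / \sum_{m'} X_{m'}$ for \emph{independent} Erlang variables, and concludes with Lemma~\ref{lm:erlang-max} (a threshold argument via union bound and Markov). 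You instead bypass Theorem~\ref{th:optimal-scheme} entirely: you keep only the top eigenvalue and eigenvector of each ciphertext, extract the budget $\sum_m 1/\lambda_{\max}(\Enc_k(m)) \leq d$ from the orthogonal supports forced by perfect correctness, show by monotonicity and Schur-convexity that the flat weight profile is the worst case, and run the extreme-value estimate on the exact Dirichlet joint tail $\pr{1-a-b}_+^{d-1}$ with a second-order Bonferroni bound (your negative-quadrant-dependence inequality indeed follows from that explicit formula, so this step is solid). Both routes are sound; yours is self-contained with respect to the Section~\ref{sec:optimalscheme} machinery, avoids the message expurgation, and actually yields the stronger bound $\frac{1}{2}\pr{1 - M^{-1/(d-1)}} \geq \frac{(1-e^{-1})\ln 2}{2}\cdot\frac{\log M}{d} \approx 0.219\,\frac{\log M}{d}$, a noticeably better constant than $0.02285$ and without the $-1$; the paper's route, in exchange, reuses a structural theorem it develops anyway and needs nothing beyond independence in its tail lemma.
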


\begin{proof}
Note first that by Theorem~\ref{th:optimal-scheme}, we have
\begin{align}
    \pwinunifs{\calE} \geq \frac{1}{2} \pwinunifs{\haarschemeu{\frac12 M, d}}.
\end{align}
Therefore, in the rest of proof we can assume that 
\begin{align}
     \Enc_k(m) = \frac{\Pi_k^m}{d/M} \label{eq:flat-spec-assum}
\end{align}
and only replace $M$ by $\frac12M$ and multiply the lower-bound on $\pwinunifs{\calE}$ by $\frac12$ at the end.

We consider a specific attack for the adversaries described as follows. Let $\ket{e_1}, \cdots, \ket{e_d}$ be an orthonormal basis for $A$ and define $P_i \eqdef \kb{e_i}$ for all $i$. The cloner performs the POVM $\set{P_i}$ on the ciphertext and share the classical output $i$ with both Bob and Charlie who decode the message as $\widehat{m}(i, k)$ using $i$ and the key $k$. The probability of winning for this attack is 
\begin{align}
    \max_{\widehat{m}}\P[m, k, i]{\widehat{m}(i, k) = m}
    &=  \max_{\widehat{m}}\E[k\sim P_K]{\frac{1}{M}\sum_{m} \sum_{i} \tr{P_i \Enc_k(m)}\indic{\widehat{m}(i, k)= m} }\\
    &= \max_{\widehat{m}}\E[k\sim P_K]{\frac{1}{M} \sum_{i} \tr{P_i \Enc_k(\widehat{m}(i, k))} }\\
    &= \E[k\sim P_K]{\frac{1}{M} \sum_{i} \max_m \tr{P_i \Enc_k(m)} },
\end{align}
where every maximization of $\widehat{m}$ is over all functions from $[d]\times [K] \to [M]$ and the maximization of $m$ is over all messages in $[M]$. For any unitary operator $U$ acting on $A$, $\set{UP_iU^\dagger}$ is a POVM. Choosing $U$ at random according to Haar measure we obtain that
\begin{align}
    \pwinunifs{\calE} 
    &\geq \E[U\sim \Haar]{\E[k\sim P_K]{\frac{1}{M} \sum_{i} \max_{m\in [M]} \tr{UP_i U^\dagger \Enc_k(m)} }} \\
    &= \frac{1}{M} \sum_{i\in [d]} {\E[k\sim P_K]{\E[U\sim \Haar]{\max_{m\in [M]} \tr{UP_i U^\dagger \Enc_k(m)} }}} \\
    &\stackrel{(a)}{=} \frac{d}{M} {\E[k\sim P_K]{\E[U\sim \Haar]{\max_{m\in [M]} \tr{UP_0 U^\dagger \Enc_k(m)} }}},
\end{align}
where $(a)$ follows from unitary invariance of Haar measure. We define next $\ket{\phi} = U\ket{e_0}$ which is a random vector distributed according to uniform spherical measure. Then, 
\begin{align}
    \E[U\sim \Haar]{\max_{m\in [M]} \tr{UP_0 U^\dagger \Enc_k(m)} } 
    &= \E[\ket{\phi}]{\max_{m\in [M]} \bra{\phi} \Enc_k(m)\ket{\phi}} \\
    &\stackrel{(a)}{=} \frac{M}{d} \E[\ket{\phi}]{\max_{m\in [M]} \bra{\phi}\Pi_k(m)\ket{\phi}}\\
    &= \frac{M}{d} \E[\ket{\phi}]{\max_{m\in [M]} \norm{\Pi_k(m)\ket{\phi}}^2}
\end{align}
where $(a)$ follows from our additional assumption in Eq.~\eqref{eq:flat-spec-assum} that $\Enc_k(m) = \Pi_k(m)/(d/M)$.
Hence, it is enough to show that $\E[\ket{\psi}]{  \max_{m\in[M]}\norm{\Pi_{k}(m) \ket{\psi}}^2 } \geq 0.0457 \frac{\log M}{d}$ for any fixed $k\in[K]$. Since $\calE$ is correct, there exists an orthonormal basis $(\ket{e_0}, \cdots, \ket{e_{d-1}})$ such that $\Pi_{k}(m) = \sum_{i\in \calI_m}  \kb{e_i}$ where $(\calI_1, \cdots, \calI_M)$ forms a partition of $[d]$.  Let $(a_0, \cdots, a_{d-1})$ and  $(b_0, \cdots, b_{d-1})$ be two independent sequences of iid standard normal random variables. Define 
\begin{align}
    \ket{\widetilde{\psi}} &= \sum_{i\in[d]} (a_i + jb_i)\ket{e_i}\\
    \ket{{\psi}} &= \ket{\widetilde{\psi}} / \norm{\ket{\widetilde{\psi}} }
\end{align}
Then, $\ket{\psi}$ is distributed according to uniformly spherical measure. We also have
\begin{align}
    \norm{\Pi_{k}(m)\ket{\psi}}^2 = \frac{ \sum_{i\in \calI_m} (a_i^2 + b_i^2)}{\sum_{i=1}^d (a_i^2 + b_i^2)}
\end{align}
Let $(X_0, \cdots, X_{M-1})$ be independent random variables such that $X_m\sim\textnormal{Erlang}(\tr{\Pi_k(m)}, 1/2)$ (See Appendix~\ref{sec:erlang-dist} for the definition and properties of Erlang distribution).  We then have
\begin{align} 
    \E[\ket{\psi}]{  \max_{m\in[M]}\norm{\Pi_{k}(m) \ket{\psi}}^2 } 
    &= \E{\frac{\max_{m\in [M]} X_m}{\sum_{m'\in [M]}X_{m'}}}\\
    &\stackrel{(a)}{\geq} 0.0457\frac{\log M}{\sum_{m\in[M]} \tr{\Pi_m(k)}}\\
    &= 0.0457\frac{\log M}{d},
\end{align}
where $(a)$ follows from Lemma~\ref{lm:erlang-max}.
\end{proof}

\section{Counter-Example for Simultaneous O2H} \label{sec:counterexample}
The so-called one-way-to-hiding (O2H) lemma \cite{Unruh15} is an important tool in the analysis of the quantum random-oracle model. Informally, it states that if an algorithm has an advantage over random guessing in determining which of two quantum-accessible oracles it has query access to, there is a reduction that outputs an input on which the two oracles differ. The latter process is called \emph{extraction}.
In \cite{broadbent2019uncloneable}, a \emph{simultaneous} variant of this lemma has been presented (Lemma 21). In this setting, the starting point is two non-communicating agents that receive (in general entangled) quantum inputs and interact with an oracle. If the two agents simultaneously succeed in producing an output of the oracle corresponding to an input where the two possible oracles differ, then that input can be simultaneously extracted by each agent.  In \cite{CMP20}, another variant of the simultaneous O2H lemma was shown. Here, like in the single-party O2H lemma, it is only required that the two agents distinguish the two possible oracles. The simultaneous success probability in the distinguishing task, however, needs to be close to 1 to guarantee a non-trivial success probability forextraction. It is an interesting open question whether a simultaneous analogue of Unruh's O2H lemma exists that gives a non-trivial extraction guarantee whenever two agents as described above simultaneously succeed at distinguishing two oracles.

In the following, we provide a counterexample, answering the above question in the negative. More precisely, for a random function $H: \{0,1\}\to\{0,1\}$, we exhibit an input state $\ket{\psi}_{BC}$ and algorithms $\mathcal A_B$ and $\mathcal A_C$ such that the following holds. $i)$ When provided with the registers $B$, and $C$, of $\ket\psi$, respectively, as input, the two algorithms both output $H(0)$ simultaneously with probability $>1/2$, and $ii)$ the two O2H extractors never succeed simultaneously.

Before presenting the counterexample, let us formally state the simultaneous O2H lemma that is proven in \cite{broadbent2019uncloneable}.
\begin{lemma}\label{lem:simO2H}
	For $L \in \{B, C\}$, let $q_L$ be a nonnegative integer, $U_L$ a unitary and
	$\{\pi^y_L\}_{y \in \{0,1\}^n}$ be a projective measurement. Let further $\ket{\psi} $ be a unit vector and $x \in \{0,1\}^\lambda$.
Then, we have
\begin{equation}
	\mathbb E_H
	\norm{
		\Pi^{H(x)}
		\left(
		\left(U_B O^H_B\right)^{q_B}
		\otimes
		\left(U_C O^H_C\right)^{q_C}
		\right)
		\ket{\psi}
	}^2
	\leq
	\frac{9}{2^n}
	+
	(3q_B q_C + 2)q_Bq_C
	\sqrt{M}\needspace{\text{ \quad and}}{}\label{eq:simO2H}
\end{equation}
\needspace{}{where $\Pi^{H(x)} = \pi_B^{H(x)} \otimes\pi_C^{H(x)}$ and}
\begin{equation}
	M
	=
	\mathbb E_k \mathbb E_\ell \mathbb E_H
	\norm{
		\left(\ketbra{x}{x}_{B_Q} \otimes \ketbra{x}{x}_{C_Q}\right)
		\left(
		\left(U_B O_B^H\right)^k
		\otimes
		\left(U_C O_C^H\right)^\ell
		\right)
		\ket{\psi}
	}^2
\end{equation}
with \needspace{$\Pi^{H(x)} = \pi_B^{H(x)} \otimes\pi_C^{H(x)}$, }{} and $k$, $\ell$, and $H$ being uniformly distributed over $\{0, \ldots, q_B-1\}$, $ \{0, \ldots, q_C-1\}$, and $\set{h:\{0,1\}^\lambda\to\{0,1\}^n}$, respectively.
\end{lemma}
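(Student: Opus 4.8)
The plan is to lift the hybrid/reprogramming argument behind the single-party O2H lemma to the two-party setting, arranging the tensor-product bookkeeping so that the error is governed by the genuinely \emph{simultaneous} query magnitude $M$ rather than by a product of the two marginal magnitudes. Write $V_L^H \eqdef (U_L O_L^H)^{q_L}$ for $L \in \{B,C\}$ and $\Pi^y \eqdef \pi_B^y \otimes \pi_C^y$. Since $H(x)$ is uniform and independent of the restriction of $H$ to the remaining inputs, the reprogramming identity gives
\begin{align}
\E[H]{\norm{\Pi^{H(x)}\pr{V_B^H \otimes V_C^H}\ket\psi}^2} = \E[y]{\E[H]{\norm{\Pi^y\pr{V_B^{H^{x\to y}} \otimes V_C^{H^{x\to y}}}\ket\psi}^2}},
\end{align}
where $H^{x\to y}$ denotes $H$ reprogrammed to output $y$ at $x$ and $y$ is uniform over $\{0,1\}^n$.

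I would then blind each party by reprogramming the oracle at $x$ to a fixed reference value, setting $\bar V_L \eqdef V_L^{H^{x\to 0}}$, which is unitary and independent of $y$, together with $\Delta_L \eqdef V_L^{H^{x\to y}} - \bar V_L$, and use the decomposition
\begin{align}
V_B^{H^{x\to y}} \otimes V_C^{H^{x\to y}} = S + \Delta_B \otimes \Delta_C, \quad S \eqdef V_B^{H^{x\to y}} \otimes \bar V_C + \bar V_B \otimes V_C^{H^{x\to y}} - \bar V_B \otimes \bar V_C.
\end{align}
In each of the three summands of $S$ at least one factor is blinded, hence $y$-independent; bounding the projector on the other factor by $\one$ and using that the surviving unitary factor preserves norm, the collapse $\sum_y \pi_L^y = \one$ yields $\E[y]{\norm{\Pi^y\pr{\cdot}\ket\psi}^2} \leq 2^{-n}$ for each of the three terms. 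Combining them through $\norm{a+b+c}^2 \leq 3\pr{\norm{a}^2 + \norm{b}^2 + \norm{c}^2}$ gives $\E[y]{\norm{\Pi^y S\ket\psi}^2} \leq 9/2^n$; this triangle inequality over the three baseline terms is exactly the origin of the constant $9$.

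It remains to control the doubly-corrected term $\Delta_B \otimes \Delta_C$. Each $\Delta_L$ telescopes over the $q_L$ queries into a sum of contributions localized to the event that the $L$-query register equals $x$, i.e.\ each carrying a factor $\proj{x}$ acting on that register. Expanding the product over the $q_B q_C$ pairs of query indices and applying Cauchy--Schwarz, the resulting sum of squared norms of $\pr{\proj{x} \otimes \proj{x}}\ket{\psi_{k,\ell}}$ --- where $\ket{\psi_{k,\ell}}$ is the joint state just before $B$'s $k$-th and $C$'s $\ell$-th queries --- assembles precisely into $M$, giving $\E[H]{\norm{\pr{\Delta_B \otimes \Delta_C}\ket\psi}^2} = O\pr{q_B^2 q_C^2 M}$. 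Finally I would combine $S$ with $\Delta_B \otimes \Delta_C$ via the triangle inequality and Cauchy--Schwarz on the cross term, and use $M \leq \sqrt M$ (since $M \leq 1$) to reach the stated bound $9/2^n + (3q_B q_C + 2)q_B q_C \sqrt M$.

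I expect the crux to be this simultaneous magnitude estimate. In the single-party lemma the advantage is controlled by one extraction probability, whereas here the two blinding errors $\Delta_B$ and $\Delta_C$ must be shown to combine into the \emph{joint} quantity $M = \E{\norm{\pr{\proj{x} \otimes \proj{x}}\cdots\ket\psi}^2}$ rather than into the product of their marginal magnitudes; making this localization rigorous while keeping the prefactor as tight as $(3q_Bq_C+2)q_Bq_C$ is the delicate point. The baseline bound $9/2^n$, by contrast, is robust --- and, as the counterexample in this section shows, the constant multiplying $2^{-n}$ cannot be reduced below $9/8$.
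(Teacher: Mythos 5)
First, a point of orientation: this paper does not prove Lemma~\ref{lem:simO2H} at all --- it restates it verbatim from \cite{broadbent2019uncloneable} (their Lemma~21), and the paper's own contribution in Section~\ref{sec:counterexample} is a \emph{counterexample} showing that the constant $9$ cannot be replaced by anything below $9/8$. So your proposal can only be compared against the Broadbent--Lord proof, which belongs to the same family as your sketch: oracle reprogramming, puncturing at $x$, and a telescoping hybrid over queries. Within that approach, your derivation of the $9/2^n$ term is correct and isolates exactly where the $9$ comes from: in your decomposition $V_B^{H^{x\to y}}\otimes V_C^{H^{x\to y}} = S + \Delta_B\otimes\Delta_C$, each of the three summands of $S$ is a tensor product of two \emph{unitaries}, at least one of which is $y$-independent; dropping the projector on the $y$-dependent side, absorbing the unitaries by norm preservation, and using $\sum_y \pi_L^y=\one$ gives $\mathbb{E}_y\norm{\Pi^y(\cdot)\ket\psi}^2\leq 2^{-n}$ per summand, and the inequality $\norm{a+b+c}^2\leq 3(\norm{a}^2+\norm{b}^2+\norm{c}^2)$ produces the factor $3\cdot 3=9$. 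That part is sound.

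The gap is quantitative, and you flagged it yourself: your accounting does not reach the stated prefactor $(3q_Bq_C+2)q_Bq_C$. Concretely, each telescoped summand of $\Delta_L$ carries an operator-norm factor of at most $2$ (from $O^{H^{x\to y}}-O^{H^{x\to 0}}$), so the tensor product costs a factor $4$ per index pair, giving $\norm{(\Delta_B\otimes\Delta_C)\ket\psi}\leq 4\sum_{k,\ell}\norm{\pr{\proj{x}\otimes\proj{x}}\ket{\psi_{k,\ell}}}\leq 4q_Bq_C\bigl(\mathbb{E}_{k,\ell}\norm{\pr{\proj{x}\otimes\proj{x}}\ket{\psi_{k,\ell}}}^2\bigr)^{1/2}$ by Cauchy--Schwarz, hence $\mathbb{E}\norm{(\Delta_B\otimes\Delta_C)\ket\psi}^2\leq 16\,q_B^2q_C^2M$. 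Combining with $S$ as you propose (triangle inequality plus Cauchy--Schwarz on the cross term, then $M\leq\sqrt M$) yields a bound of the form $9/2^n+(16q_Bq_C+24)q_Bq_C\sqrt M$, i.e.\ the same shape but with constants roughly five to ten times larger than the stated $(3q_Bq_C+2)q_Bq_C$. Since the lemma fixes those constants, your sketch establishes only a weaker variant --- a real shortfall, though not a structural one, and immaterial for how the lemma is used in this paper. One further detail you must pin down: when telescoping $\Delta_L$, the operators appearing to the right of each inserted $\proj{x}$ must be powers of $U_LO_L^{H^{x\to y}}$ (the \emph{un}punctured oracle, which is uniformly distributed over all functions), not of $U_LO_L^{H^{x\to 0}}$; with the opposite telescoping direction the quantity you assemble is $M$ evaluated for a non-uniform oracle distribution (functions forced to $0$ at $x$), which is not the $M$ defined in the lemma. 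Telescoping in the correct direction, your identification with $M$ is exact.
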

The important thing to note here is, that the trivially achievable left-hand side for Equation  \eqref{eq:simO2H},  that does not require querying the oracle, is $2^{-n}$. The inequality in Equation \eqref{eq:simO2H}, however, provides a non-trivial extraction guarantee only once the left-hand side is strictly larger then $9\cdot 2^{-n}$. This discrepancy also prevents a straight-forward generalization to a full search-to-decision-style simultaneous O2H lemma, and causes the weak bound in the non-trivial generalization in this direction presented in \cite{CMP20}. Note also that the above lemma can be generalized to apply to algorithms that guess only a function of the output $H(x)$, see Lemma 19 in \cite{CMP20}. Therefore, a version of the above theorem with the factor 9 replaced by a constant $c<2$ would directly imply a search-to-decision variant.

\medskip
We continue to present our counterexample that shows that Lemma \ref{lem:simO2H} has no chance of being true when replacing the factor 9 with any constant $c<\frac{9}{8}=1.125$.
\begin{theorem}
	For $n=1$ and $q_B = q_C = 1$, there exist $U_L$ and $\pi_L$ for $L \in \{B,C\}$ such that, with the notation from Lemma \ref{lem:simO2H}, \needspace{$M=0$ but}{}
	\begin{equation}
		\mathbb E_H
		\norm{
			\Pi^{H(x)}
			\left(
			U_B O^H_B
			\otimes
			U_C O^H_C
			\right)
			\ket{\psi}
		}^2
		=\frac{9}{16}\needspace{.}{,}
	\end{equation}
\needspace{}{but $M=0$.}
\end{theorem}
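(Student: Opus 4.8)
The plan is to realize, inside the O2H framework, exactly the simultaneous guessing strategy of Lemma~\ref{lm:pg-lower-bound}, letting the single query of each agent play the role of the cloning isometry $V$. I would fix $x$ and prepare the input state so that its query registers are in the ``cloned'' configuration $V\ket{x}$ while the output registers start in $\ket{0}$:
\begin{align}
  \ket{\psi} = \frac{1}{\sqrt{2}}\pr{\ket{\bot}_{B_Q}\ket{x}_{C_Q} + \ket{x}_{B_Q}\ket{\bot}_{C_Q}} \otimes \ket{0}_{B_{\text{out}}}\ket{0}_{C_{\text{out}}},
\end{align}
where $\ket{\bot}$ is a null symbol on which the oracle acts as the identity. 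Each term has query input register equal to $x$ on exactly one side, so $\proj{x}_{B_Q}\otimes\proj{x}_{C_Q}$ annihilates $\ket{\psi}$; since $q_B=q_C=1$ forces $k=\ell=0$ in the definition of $M$, this gives $M=0$ immediately.

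Running the two oracle queries writes $H(x)$ into each output register on its active branch, turning $\ket{\psi}$ into
\begin{align}
  \frac{1}{\sqrt{2}}\pr{\ket{\bot}_{B_Q}\ket{0}_{B_{\text{out}}}\ket{x}_{C_Q}\ket{H(x)}_{C_{\text{out}}} + \ket{x}_{B_Q}\ket{H(x)}_{B_{\text{out}}}\ket{\bot}_{C_Q}\ket{0}_{C_{\text{out}}}}.
\end{align}
I would then take $U_B$ (and identically $U_C$) to be any unitary extending the partial isometry that sends $\ket{\bot}_{B_Q}\ket{0}_{B_{\text{out}}}\mapsto\ket{\bot}$, $\ket{x}_{B_Q}\ket{0}_{B_{\text{out}}}\mapsto\ket{0}$, and $\ket{x}_{B_Q}\ket{1}_{B_{\text{out}}}\mapsto\ket{1}$ onto a logical qutrit. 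Applying $U_B\otimes U_C$ then maps the state above exactly to $V\ket{H(x)}$, the post-cloning state of Lemma~\ref{lm:pg-lower-bound} with $\rho=\proj{0}$ and $\sigma=\proj{1}$.

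Finally I would use the projectors from that lemma: $\pi_B^0=\pi_C^0=\Pi$ and $\pi_B^1=\pi_C^1=\one-\Pi$, where $\Pi=\proj{\phi}$ is the rank-one projector with $\ket{\phi}=\frac{\sqrt{3}}{2}\ket{0}+\frac{1}{2}\ket{\bot}$ (i.e.\ $\alpha=1/4$). Averaging over $H$ makes $H(x)$ a uniform bit, so the left-hand side of the theorem equals
\begin{align}
  \frac{1}{2}\tr{(\Pi\otimes\Pi)V\proj{0}V^\dagger} + \frac{1}{2}\tr{((\one-\Pi)\otimes(\one-\Pi))V\proj{1}V^\dagger},
\end{align}
which is exactly Eq.~\eqref{eq:pg-pi} for these states. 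The computations \eqref{eq:pi-tr} and \eqref{eq:one-minus-pi-tr} evaluate the two traces to $3/8$ and $3/4$, so the average is $\frac{1}{2}\pr{\frac{3}{8}+\frac{3}{4}}=\frac{9}{16}$; because $\lambda_{\max}=1$ for pure states this is an exact equality, not merely a lower bound. I expect the only genuine subtlety to be the bookkeeping that makes the oracle act trivially on $\ket{\bot}$ (so the inactive branch leaks no $H(x')$ for $x'\neq x$): this is precisely what lets $U_B,U_C$ land in the clean logical qutrit and what makes $M=0$ hold on the nose.
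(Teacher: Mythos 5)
Your overall strategy is the same as the paper's: engineer the single query of each party so that the post-query state is exactly $V\ket{H(x)}$ for the cloning isometry of Eq.~\eqref{eq:cloning}, invoke the explicit projectors from the proof of Lemma~\ref{lm:pg-lower-bound} (pure orthogonal states, $\alpha=1/4$, hence exactly $9/16$), and get $M=0$ because the two query registers are never simultaneously $x$. However, there is a genuine gap, and it sits precisely at the point you dismiss as bookkeeping: you postulate a query-register state $\ket{\bot}$ ``on which the oracle acts as the identity'' while the output register sits in $\ket{0}$. In the oracle model of Lemma~\ref{lem:simO2H} no such state exists: every basis state of the query register is an input of $H$, and the XOR oracle maps $\ket{x'}_{B_Q}\ket{0}_{B_{\mathrm{out}}}\mapsto\ket{x'}_{B_Q}\ket{H(x')}_{B_{\mathrm{out}}}$. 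So the inactive branch does leak $H(x')$, and this is not a minor blemish; it changes the answer. Concretely, take $\lambda=1$, $x=0$, $\ket{\bot}=\ket{1}$, and run your construction with the standard oracle. Writing $\ket{2_L}$ for the (unitarily extended) image of $\ket{1}\ket{1}$ under your $U_B=U_C$, the post-query logical state is $V\ket{H(0)_L}$ when $H(1)=0$, but it is $\frac{1}{\sqrt{2}}\pr{\ket{2_L}\ket{H(0)_L}+\ket{H(0)_L}\ket{2_L}}$ when $H(1)=1$; against your fixed projectors, the four cases $(H(1),H(0))=(0,0),(0,1),(1,0),(1,1)$ give simultaneous success probabilities $\frac{3}{8}$, $\frac{3}{4}$, $0$, $1$, so the $H$-average is $\frac{17}{32}$, not $\frac{9}{16}=\frac{18}{32}$, and the theorem is not proven.

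The paper's fix is exactly one move, and it is the heart of the counterexample rather than bookkeeping: implement the null symbol as $\ket{\bot}\eqdef\ket{1}\ket{+}$, i.e., put the output register of the inactive branch into $\ket{+}$. Since XORing any bit into $\ket{+}$ leaves it invariant, $O^H\pr{\ket{1}\ket{+}}=\ket{1}\ket{+}$ for every $H$, so the inactive branch is genuinely $H$-independent, the post-query state is exactly $V_{A\to BC}\pr{\ket{0}\ket{H(0)}}$, and Lemma~\ref{lm:pg-lower-bound} applies directly with $\rho=\proj{00}$ and $\sigma=\proj{01}$; even $U_B=U_C=\one$ suffices, making your relabeling unitaries unnecessary. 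Your $M=0$ argument and the final evaluation $\frac{1}{2}\pr{\frac{3}{8}+\frac{3}{4}}=\frac{9}{16}$ then go through verbatim, and the repaired proof coincides with the paper's.
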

\begin{proof}
Let the registers $B$ and $C$ have two qubits each, where the first one corresponds to the input to $H$, the output of $H$ will be XORred to the second qubit. Define $\ket\psi_{BC}=\frac{1}{\sqrt{2}}(\ket{0}\ket{0}_B\otimes\ket{1}\ket{+}_C+\ket{1}\ket{+}_B\otimes \ket{0}\ket{0}_C)$, where $\ket +=\frac{1}{\sqrt 2}(\ket 0+\ket 1)$, and set $U_B=U_C=\one$. After the two queries, the joint state is
\begin{equation}
	\ket{\psi_1}_{BC}=\left(
	O^H_B
	\otimes
 O^H_C
	\right)
	\ket{\psi}=\frac{1}{\sqrt{2}} \big(\ket{0}\ket{H(0)}_B\otimes\ket{1}\ket{+}_C+\ket{1}\ket{+}_B\otimes \ket{0}\ket{H(0)}_C \big).
\end{equation}
Setting $\ket\phi=\ket{0}\ket{H(0)}$ and $\ket\bot=\ket{1}\ket{+}$, we observe that
\needspace{$
	\ket{\psi_1}_{BC}=V_{A\to BC}\ket\phi_A \,
      $}
  {\begin{equation}
  		\ket{\psi_1}_{BC}=V_{A\to BC}\ket\phi_A \,
  \end{equation}}
      where $V_{A \to BC}$ was defined in Equation~\eqref{eq:cloning}.
We employ Lemma \ref{lm:pg-lower-bound} with $\rho=\proj{00}$ and $\sigma=\proj{01}$ to conclude that there exist $\pi_L^b$ for $L \in \{B,C\}$, $b=0,1$ such that
	\begin{equation}
	\mathbb E_H
	\norm{
		\Pi^{H(0)}
		\left(
		U_B O^H_B
		\otimes
		U_C O^H_C
		\right)
		\ket{\psi}
	}^2
	=\frac 1 2+\frac{1}{16}=\frac{9}{16}.
\end{equation}
However, the extraction measurement never succeeds. Indeed, there is only one query to ``choose from'' on each side, and the computational basis measurement of the first qubits of  $B$ and $C$ of the query input state $\ket\psi_{BC}$ defined above never returns $(0,0)$, hence $M=0$.
\qed\end{proof}

\section{Open Problems}
One major question is whether there exists a sequence of QECMs $\set{\calE_\lambda}_{\lambda\in \N}$ such that 
\begin{align}
    \lim_{\lambda \to \infty} \pwininds{\calE_\lambda} = \frac{1}{2} \text{ or }     \lim_{\lambda \to \infty} {\card{\calM_\lambda}\pwinunifs{\calE_\lambda}}{} = 1.
\end{align}
Analyzing the performance of the scheme defined in Definition~\ref{def:optimal-scheme} might answer this question. Another open question is the validity of Conjecture~\ref{conj:optimal}, which simplifies the characterization of the optimal scheme and justifies the assumptions that we made in Theorem~\ref{th:lower-bound-unif}. 

Regarding simultaneous O2H, while our results show that the constant $9$ on the right-hand side of Eq.~\eqref{eq:simO2H} cannot be replaced by $1$, finding the optimal constant could be of interest both for uncloneable encryption and quantum copy-protection applications. Extensions of simultaneous O2H to more than two parties are natural to study and might also have natural applications in quantum copy protection.

\section*{Acknowledgements}
We would like to thank Michael Walter for useful discussions. CM was funded by a NWO VENI grant (Project No. VI.Veni.192.159). MT and CS were supported by a NWO VIDI grant (Project No. 639.022.519).

 \needspace{\bibliographystyle{abbrv}}{\bibliographystyle{alpha}}

\begin{thebibliography}{BHH{\etalchar{+}}19}
	
	\bibitem[Aar09]{Aaronson09}
	Scott Aaronson.
	\newblock Quantum copy-protection and quantum money.
	\newblock In {\em 2009 24th Annual IEEE Conference on Computational
		Complexity}, pages 229--242. IEEE, 2009.
	
	\bibitem[AHU19]{AHU19}
	Andris Ambainis, Mike Hamburg, and Dominique Unruh.
	\newblock Quantum security proofs using semi-classical oracles.
	\newblock In Alexandra Boldyreva and Daniele Micciancio, editors, {\em Advances
		in Cryptology -- CRYPTO 2019}, pages 269--295, Cham, 2019. Springer
	International Publishing.
	
	\bibitem[ALL{\etalchar{+}}20]{ALLZZ20}
	Scott Aaronson, Jiahui Liu, Qipeng Liu, Mark Zhandry, and Ruizhe Zhang.
	\newblock New approaches for quantum copy-protection.
	\newblock {\em arXiv preprint arXiv:2004.09674}, 2020.
	
	\bibitem[ALP20]{ALP20}
	Prabhanjan Ananth and Rolando~L La~Placa.
	\newblock Secure software leasing.
	\newblock {\em arXiv preprint arXiv:2005.05289}, 2020.
	\newblock to appear in Eurocrypt 2021.
	
	\bibitem[BB84]{BB84}
	C.~H. Bennett and G.~Brassard.
	\newblock {Quantum cryptography: Public key distribution and coin tossing}.
	\newblock In {\em Proceedings of IEEE International Conference on Computers,
		Systems, and Signal Processing}, page 175, India, 1984.
	
	\bibitem[BHH{\etalchar{+}}19]{BHHHP19}
	Nina Bindel, Mike Hamburg, Kathrin H{\"o}velmanns, Andreas H{\"u}lsing, and
	Edoardo Persichetti.
	\newblock Tighter proofs of cca security in the quantum random oracle model.
	\newblock In Dennis Hofheinz and Alon Rosen, editors, {\em Theory of
		Cryptography}, pages 61--90, Cham, 2019. Springer International Publishing.
	
	\bibitem[BL20]{broadbent2019uncloneable}
	Anne Broadbent and S{\'e}bastien Lord.
	\newblock {Uncloneable Quantum Encryption via Oracles}.
	\newblock In Steven~T. Flammia, editor, {\em 15th Conference on the Theory of
		Quantum Computation, Communication and Cryptography (TQC 2020)}, volume 158
	of {\em Leibniz International Proceedings in Informatics (LIPIcs)}, pages
	4:1--4:22, Dagstuhl, Germany, 2020. Schloss Dagstuhl--Leibniz-Zentrum f{\"u}r
	Informatik.
	
	\bibitem[BS16]{BS16}
	Anne Broadbent and Christian Schaffner.
	\newblock Quantum cryptography beyond quantum key distribution.
	\newblock {\em Designs, Codes and Cryptography}, 78:351--382, 2016.
	
	\bibitem[CMP20]{CMP20}
	Andrea Coladangelo, Christian Majenz, and Alexander Poremba.
	\newblock Quantum copy-protection of compute-and-compare programs in the
	quantum random oracle model.
	\newblock Cryptology ePrint Archive, Report 2020/1194, 2020.
	\newblock \url{https://eprint.iacr.org/2020/1194}.
	
	\bibitem[Got03]{G03}
	Daniel Gottesman.
	\newblock Uncloneable encryption.
	\newblock {\em Quantum Information and Computation}, pages 581--602, 2003.
	
	\bibitem[KSS{\etalchar{+}}20]{KSSSS20}
	Veronika Kuchta, Amin Sakzad, Damien Stehl{\'e}, Ron Steinfeld, and Shi-Feng
	Sun.
	\newblock Measure-rewind-measure: Tighter quantum random oracle model proofs
	for one-way to hiding and cca security.
	\newblock In Anne Canteaut and Yuval Ishai, editors, {\em Advances in
		Cryptology -- EUROCRYPT 2020}, pages 703--728, Cham, 2020. Springer
	International Publishing.
	
	\bibitem[Unr15]{Unruh15}
	Dominique Unruh.
	\newblock Revocable quantum timed-release encryption.
	\newblock {\em J. ACM}, 62(6), December 2015.
	
	\bibitem[Wat18]{Watrous2018}
	John Watrous.
	\newblock {\em The theory of quantum information}.
	\newblock Cambridge University Press, 2018.
	
	\bibitem[Wie83]{Wiesner83}
	Stephen Wiesner.
	\newblock Conjugate coding.
	\newblock {\em SIGACT News}, 15(1):78–88, January 1983.
	
	\bibitem[{Win}99]{Winter99}
	A.~{Winter}.
	\newblock Coding theorem and strong converse for quantum channels.
	\newblock {\em IEEE Transactions on Information Theory}, 45(7):2481--2485,
	1999.
	
\end{thebibliography}
\newcommand{\etalchar}[1]{$^{#1}$}

\vfill\pagebreak

\appendix 
\section{Proof of Lemma~\ref{lm:unif-perf}}
\label{sec:unif-perf-proof}
Due to correctness of the QECM $\calE$, the ciphertext density matrices $\Enc_k(0), \ldots, \Enc_k(M-1)$  are mutually orthogonal for a fixed $k$. Thus, there exists an orthonormal basis $(\ket{e_0}, \cdots, \ket{e_{d-1}})$, real numbers $\lambda_{0}, \cdots, \lambda_{d-1} \in [0, 1]$, and disjoint partition $\calS_0, \cdots, \calS_{M-1}$ of $[d]$, all depending on $k$ such that
\begin{align}
    \Enc_k(m) = \needspace{{\sum}_{i\calS_m}}{\sum_{i\in \calS_m}} \lambda_{i} \kb{e_i}.
\end{align} 
 Without loss of generality, we can assume that $\calS_m = \set{i:  \sum_{j=1}^{m-1} \card{\calS_j} \leq i \leq \sum_{j=1}^m \card{\calS_j} - 1}$. According to our definition of $t$, we have $t = (\card{\calS_1}, \cdots, \card{\calS_M})$. 

 For notational simplicity, we first introduce a shorthand for simultaneous guessing as follows. For a function $g: [M] \to \Dens{BC}$ (think of it as encryption followed by a cloning operation), we define
\begin{align}
    \Delta\pr{g} \eqdef \sup_{\set{P_m}_{m\in [M]}, \set{Q_m}_{m\in [M]}} \frac{1}{M} \sum_{m\in [M]} \tr{(P_m \otimes Q_m)g(m)},
\end{align}
where the supremum is taken over all pairs of POVMs $\set{P_m}_{m\in \calM}, \set{Q_m}_{m\in \calM}$.
 We can write
\begin{align}
    \pwinunifs{\calE} = \sup_{B, C, \calN_{A\to BC}} \E[k\leftarrow \KeyGen]{ \Delta\pr{\calN_{A\to BC} \circ \Enc_k }}. \label{eq:win-delta}
\end{align}
For a fixed unitary $u$, we denote by $\calM_u$ the quantum channel mapping $\rho$ to $u\rho u^\dagger$. We have
\begin{align}
    \sup_{B, C, \calN_{A\to BC}} \E[k\leftarrow \KeyGen]{ \Delta\pr{\calN_{A\to BC} \circ \Enc_k }} = \sup_{B, C, \calN_{A\to BC}} \E[k\leftarrow \KeyGen]{ \Delta\pr{\calN_{A\to BC} \circ \calM_u \circ \Enc_k }}
\end{align}
because $\calN_{A\to BC} \mapsto \calN_{A\to BC} \circ \calM_u$ is bijection. Therefore, if $U$ is distributed according to Haar measure over $\Unit{A}$, it holds that
\begin{align}
    \sup_{B, C, \calN_{A\to BC}} \!\!\!\!\E[k\leftarrow \KeyGen]{ \Delta\pr{\calN_{A\to BC} \circ \Enc_k }}  
    &= \E[U]{\sup_{B, C, \calN_{A\to BC}}\!\! \E[k\leftarrow \KeyGen]{ \Delta\pr{\calN_{A\to BC} \circ \calM_U \circ \Enc_k }}}\\
    &\geq  \sup_{B, C, \calN_{A\to BC}} \!\!\E[k\leftarrow \KeyGen]{\E[U]{ \Delta\pr{\calN_{A\to BC} \circ \calM_U \circ \Enc_k }}}. \label{eq:haar-delta}
\end{align}
 Let $\Lambda_m$ be the set of permutations $\pi$ of $[d]$ such that $\pi(i) = i$ for all $i \notin \calS_m$. We denote by $V_\pi$ the  unitary on $A$  corresponding to the permutation $\pi$ defined by 
\needspace{$
    V_\pi \ket{e_i} = \ket{e_{\pi(i)}}.
$}
{\begin{align}
		V_\pi \ket{e_i} = \ket{e_{\pi(i)}}.
\end{align}}
Let $\pi_m$ be uniformly distributed over $\Lambda_m$ for all $m\in \calM$.
For any fixed unitary $v$ and for $U$ distributed according to Haar measure over $\Unit{A}$, $Uv$ is also distributed according to Haar measure. We hence have
\needspace{\begin{align}
    &\E[U]{ \Delta\pr{\calN_{A\to BC} \circ \calM_U \circ \Enc_k }} 
    = \E[\pi_1, \cdots, \pi_{M}]{\E[U]{  \Delta\pr{\calN_{A\to BC} \circ \calM_{UV_{\pi_1} \cdots V_{\pi_M}} \circ \Enc_k }}}\\
    &= \E[\pi_1, \cdots, \pi_{M}]{\E[U]{  \Delta\pr{\calN_{A\to BC} \circ \calM_{U}\circ\calM_{V_{\pi_1}} \circ \cdots \circ \calM_{V_{\pi_M}} \circ \Enc_k }}}\\
    &\stackrel{(a)}{\geq} {\E[U]{  \Delta\pr{\calN_{A\to BC} \circ \calM_{U}\circ\pr{\E[\pi_1, \cdots, \pi_{M}]{\calM_{V_{\pi_1}} \circ \cdots \circ \calM_{V_{\pi_M}} \circ \Enc_k}} }}}. \label{eq:delta-perm-bas}
\end{align}}
{\begin{align}
		\E[U]{ \Delta\pr{\calN_{A\to BC} \circ \calM_U \circ \Enc_k }} 
		&= \E[\pi_1, \cdots, \pi_{M}]{\E[U]{  \Delta\pr{\calN_{A\to BC} \circ \calM_{UV_{\pi_1} \cdots V_{\pi_M}} \circ \Enc_k }}}\\
		&= \E[\pi_1, \cdots, \pi_{M}]{\E[U]{  \Delta\pr{\calN_{A\to BC} \circ \calM_{U}\circ\calM_{V_{\pi_1}} \circ \cdots \circ \calM_{V_{\pi_M}} \circ \Enc_k }}}\\
		&\stackrel{(a)}{\geq} {\E[U]{  \Delta\pr{\calN_{A\to BC} \circ \calM_{U}\circ\pr{\E[\pi_1, \cdots, \pi_{M}]{\calM_{V_{\pi_1}} \circ \cdots \circ \calM_{V_{\pi_M}} \circ \Enc_k}} }}}. \label{eq:delta-perm-bas}
\end{align}}
where $(a)$ follows from the convexity of $\Delta(\cdot)$.
For a fixed $m\in \calM$, we have
\needspace{\begin{align}
    &\underset{\pi_1, \cdots, \pi_{M}}{\mathbb E}\left((\calM_{V_{\pi_1}} \!\circ\! \cdots \!\circ\! \calM_{V_{\pi_M}} \!\circ\! \Enc_k)(m)\right)
    \stackrel{(b)}{=} \underset{\pi_m}{\mathbb E}\left(\calM_{\pi_m}(\Enc_k(m))\right)= \underset{\pi_m}{\mathbb E}\left( \sum_i \lambda_{m, i}V_{\pi_m} \kb{e_i} V_{\pi_m}^\dagger\right)\\
    &=\E[\pi_m]{ \sum_i \lambda_{m, i} \kb{e_{\pi_m(i)}}} = \E[\pi_m]{ \sum_i \lambda_{m, \pi_m^{-1}(i)} \kb{e_{i}}}=  \sum_i \E[\pi_m]{ \lambda_{m, \pi_m^{-1}(i)}} \kb{e_{i}}\\
    &= \sum_{i\in \calS_m} \frac{1}{\card{\calS_m}} \kb{e_{i}}, \label{eq:expected-dens}
\end{align}}
{\begin{align}
		\E[\pi_1, \cdots, \pi_{M}]{(\calM_{V_{\pi_1}} \circ \cdots \circ \calM_{V_{\pi_M}} \circ \Enc_k)(m)} 
		&\stackrel{(b)}{=} \E[\pi_m]{\calM_{V_{\pi_m}}(\Enc_k(m))}\\
		&= \E[\pi_m]{ \sum_i \lambda_{m, i}V_{\pi_m} \kb{e_i} V_{\pi_m}^\dagger}\\
		&=\E[\pi_m]{ \sum_i \lambda_{m, i} \kb{e_{\pi_m(i)}}} \\
		&= \E[\pi_m]{ \sum_i \lambda_{m, \pi_m^{-1}(i)} \kb{e_{i}}}\\
		&=  \sum_i \E[\pi_m]{ \lambda_{m, \pi_m^{-1}(i)}} \kb{e_{i}}\\
		&= \sum_{i\in \calS_m} \frac{1}{\card{\calS_m}} \kb{e_{i}}, \label{eq:expected-dens}
\end{align}}
where $(b)$ follows since for $m\neq m'$, $\calM_{V_{\pi_{m'}}}(\rho) = \rho $ for any $\rho$ with support in the span of $\set{\ket{e_i}: i\in \calS_m}$.
Upon defining unitary $v_k$ as $v_k \ket{a_i} = \ket{e_i}, ~\forall i\in [d]$,
we can re-write Eq.~\eqref{eq:expected-dens} as
\needspace{\begin{align}
		\underset{\pi_1, \cdots, \pi_{M}}{\mathbb E}\!\left((\calM_{V_{\pi_1}} \!\!\!\circ \!\cdots \!\circ \!\calM_{V_{\pi_M}} \!\!\!\circ \!\Enc_k)(m)\right) \!=\!v_k\pr{ \sum_{i\in \calS_m} \frac{1}{\card{\calS_m}} \kb{a_{i}}} v_k^\dagger \!= \!\calM_{v_k}\pr{ \sum_{i\in \calS_m} \frac{1}{\card{\calS_m}} \kb{a_{i}}}.  \label{eq:expected-dens2}
\end{align}}
{\begin{align}
    \E[\pi_1, \cdots, \pi_{M}]{(\calM_{V_{\pi_1}} \circ \cdots \circ \calM_{V_{\pi_M}} \circ \Enc_k)(m)}  = v_k\pr{ \sum_{i\in \calS_m} \frac{1}{\card{\calS_m}} \kb{a_{i}}} v_k^\dagger = \calM_{v_k}\pr{ \sum_{i\in \calS_m} \frac{1}{\card{\calS_m}} \kb{a_{i}}}.  \label{eq:expected-dens2}
\end{align}}
Combining Eq.~\eqref{eq:delta-perm-bas} and Eq.~\eqref{eq:expected-dens2}, we obtain
\needspace{\begin{align}
    &\E[U]{ \Delta\pr{\calN_{A\to BC} \circ \calM_U \circ \Enc_k }} 
    \geq {\E[U]{  \Delta\pr{\calN_{A\to BC} \circ \calM_{U} \circ \calM_{v_k} \circ \widehat{\Enc}_{t(k)}}}}\\
    =& {\E[U]{  \Delta\pr{\calN_{A\to BC} \circ \calM_{Uv_k}  \circ \widehat{\Enc}_{t(k)}}}}= {\E[U]{  \Delta\pr{\calN_{A\to BC} \circ \calM_{U}  \circ \widehat{\Enc}_{t(k)}}}},\label{eq:haar-delta2}\quad\text{where}\\
    &\widehat{\Enc}_t(m)\eqdef \frac{1}{t_m} \sum_{i=\sum_{j=1}^{m-1}t_j}^{\sum_{j=1}^{m}t_j} \kb{a_i}.
\end{align}}
{\begin{align}
		\E[U]{ \Delta\pr{\calN_{A\to BC} \circ \calM_U \circ \Enc_k }} 
		&\geq {\E[U]{  \Delta\pr{\calN_{A\to BC} \circ \calM_{U} \circ \calM_{v_k} \circ \widehat{\Enc}_{t(k)}}}}\\
		&= {\E[U]{  \Delta\pr{\calN_{A\to BC} \circ \calM_{Uv_k}  \circ \widehat{\Enc}_{t(k)}}}}\\
		&= {\E[U]{  \Delta\pr{\calN_{A\to BC} \circ \calM_{U}  \circ \widehat{\Enc}_{t(k)}}}},\label{eq:haar-delta2}
\end{align}
where 
\begin{align}
    \widehat{\Enc}_t(m)\eqdef \frac{1}{t_m} \sum_{i=\sum_{j=1}^{m-1}t_j}^{\sum_{j=1}^{m}t_j} \kb{a_i}.
\end{align}}
Putting Eq.~\eqref{eq:win-delta}, Eq.~\eqref{eq:haar-delta}, and Eq.~\eqref{eq:haar-delta2} together, we have
\needspace{\begin{align}
    &\pwinunifs{\calE}  
    \geq \sup_{B, C, \calN_{A\to BC}}\E[k\leftarrow \KeyGen]{\E[U]{  \Delta\pr{\calN_{A\to BC} \circ \calM_{U}  \circ \widehat{\Enc}_{t(k)}}}}\\
    =& \sup_{B, C, \calN_{A\to BC}}\E[T]{\E[U]{  \Delta\pr{\calN_{A\to BC} \circ \calM_{U}  \circ \widehat{\Enc}_{T}}}}\stackrel{(c)}{=} \pwinunifs{\widetilde{\calE_{a, T}}}  
\end{align}}
{\begin{align}
		\pwinunifs{\calE}  
		&\geq \sup_{B, C, \calN_{A\to BC}}\E[k\leftarrow \KeyGen]{\E[U]{  \Delta\pr{\calN_{A\to BC} \circ \calM_{U}  \circ \widehat{\Enc}_{t(k)}}}}\\
		&= \sup_{B, C, \calN_{A\to BC}}\E[T]{\E[U]{  \Delta\pr{\calN_{A\to BC} \circ \calM_{U}  \circ \widehat{\Enc}_{T}}}}\\
		&\stackrel{(c)}{=} \pwinunifs{\haarscheme{T}{d}}  
\end{align}}
where $(c)$ follows because $\calM_{u}  \circ \widehat{\Enc}_{t}(m) = \widetilde{\Enc}_{(t,u)}(m)$. We next show that we can choose $T$ to be permutation invariant. We first define for  $t= (t_1, \cdots, t_M)$ and a permutation $\sigma$ of $[M]$, $\sigma t \eqdef (t_{\sigma(1)}, \cdots, t_{\sigma(M)})$. We then have
\begin{align}
    \Delta\pr{\calN_{A\to BC} \circ \calM_{U}  \circ \widehat{\Enc}_{\sigma t}   } = \Delta\pr{\calN_{A\to BC} \circ \calM_{U}  \circ \widehat{\Enc}_{t} \circ \sigma  } = \Delta\pr{\calN_{A\to BC} \circ \calM_{U}  \circ \widehat{\Enc}_{t}  }
\end{align}
because by definition of $\Delta(\cdot)$, it holds that $\Delta(g \circ \sigma) = \Delta(g)$ for all $g:[M]\to \calD(BC)$ and permutations $\sigma$. We next define random variable $\widetilde{T}$ such that
\begin{align}
    \P{\widetilde{T} = t} = \frac{1}{M!}\sum_{\sigma} \P{T = \sigma t},
\end{align}
which is a valid permutation invariant probability distribution.
Furthermore, 
\begin{align}
    \pwinunifs{\haarscheme{{T}}{d,M}}
    &= \sup_{B, C, \calN_{A\to BC}}{\E[U]{ \sum_{t}\P{T=t} \Delta\pr{\calN_{A\to BC} \circ \calM_{U}  \circ \widehat{\Enc}_{t}}}}\\
    &= \sup_{B, C, \calN_{A\to BC}}{\E[U]{ \sum_{t}\P{T=t}\frac{1}{M!}\sum_{\sigma} \Delta\pr{\calN_{A\to BC} \circ \calM_{U}  \circ \widehat{\Enc}_{\sigma t}}}}\\
    &= \sup_{B, C, \calN_{A\to BC}}{\E[U]{ \sum_{t}\P{T=t}\frac{1}{M!}\sum_{\sigma} \Delta\pr{\calN_{A\to BC} \circ \calM_{U}  \circ \widehat{\Enc}_{\sigma t}}}}\\
    &= \sup_{B, C, \calN_{A\to BC}}{\E[U]{ \sum_{t}\P{\widetilde{T}=t}\Delta\pr{\calN_{A\to BC} \circ \calM_{U}  \circ \widehat{\Enc}_{t}}}}\\
    &=     \pwinunifs{\haarscheme{\widetilde{T}}{d,M}},
\end{align}
as desired.

\section{The Erlang distribution}
\label{sec:erlang-dist}
In this appendix, we gather some properties of the \href{https://en.wikipedia.org/wiki/Erlang_distribution}{Erlang distribution} used in our work. 
\begin{definition}
$\textnormal{Erlang}(k, \lambda)$ is a probability distribution over $[0, \infty)$ characterized by two parameters $k\in \calN$ and $\lambda > 0$ and defined by its probability density function as
\begin{align}
    f(x) &\eqdef \frac{\lambda^kx^{k-1} e^{-\lambda x}}{(k-1)!} \, .
\end{align}
\end{definition}
When $X\sim \textnormal{Erlang}(k, \lambda)$, we have
\begin{align}
    \E{X} &= k/\lambda\\
    \P{X \leq x} &= 1- \sum_{i=0}^{k-1} \frac{e^{-\lambda x} (\lambda x)^i }{i!}.
\end{align}
Moreover, if $X_0, \cdots, X_{2k-1}$ are independent standard normal random variables, then $\sum_{i\in[2k]}|X_i|^2$ has $\textnormal{Erlang}(k, \frac{1}{2})$ distribution. 

\begin{lemma}
\label{lm:erlang-max}
Let $X_0, \cdots, X_{n-1}$ be independent such that $X_i\sim \textnormal{Erlang}(k_i, \lambda)$. We have
\begin{align}
    \E{\frac{\max_{i\in [n]} X_i}{\sum_{i\in[n]} X_i}} \geq c\frac{\log n}{\sum_{i\in[n]} k_i}.
\end{align}
for some absolute constant $c > \frac{1 - e^{-1} -0.5 }{2\log(e)} \approx 0.0457$.
\end{lemma}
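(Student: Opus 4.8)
The plan is to exploit the scale-invariance of the ratio and then to \emph{decouple} the numerator $\max_{i\in[n]} X_i$ from the denominator $S \eqdef \sum_{i\in[n]} X_i$ by restricting to a favorable event on which the ratio is bounded below by a deterministic quantity. First I would note that $\frac{\max_i X_i}{\sum_i X_i}$ is unchanged when every $X_i$ is multiplied by $\lambda$, and that $\lambda X_i \sim \textnormal{Erlang}(k_i, 1)$; hence there is no loss in assuming $\lambda = 1$. Writing $N \eqdef \sum_{i\in[n]} k_i$, linearity gives $\E{S} = N$, so Markov's inequality yields $\P{S > 2N} \leq \tfrac12$ (one could instead use that $S \sim \textnormal{Erlang}(N,1)$ for a sharper tail, but this is not needed).

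Next I would lower-bound $\P{\max_i X_i \geq t}$ for the threshold $t \eqdef \ln n$. The key observation is that, since each shape parameter satisfies $k_i \geq 1$, the variable $X_i$ (a sum of $k_i$ i.i.d.\ unit-rate exponentials) stochastically dominates a single $\textnormal{Exp}(1)$, so $\P{X_i \geq t} \geq e^{-t}$. By independence and the inequality $1-x \leq e^{-x}$,
\begin{align}
    \P{\max_{i\in[n]} X_i < t} = \prod_{i\in[n]} \P{X_i < t} \leq (1 - e^{-t})^n \leq e^{-n e^{-t}} = e^{-1},
\end{align}
using $n e^{-t} = 1$, so that $\P{\max_{i\in[n]} X_i \geq \ln n} \geq 1 - e^{-1}$.

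Combining the two tail estimates via a union bound gives $\P{\max_{i\in[n]} X_i \geq \ln n \text{ and } S \leq 2N} \geq (1-e^{-1}) - \tfrac12 = \tfrac12 - e^{-1}$. On this event the ratio is at least $\frac{\ln n}{2N}$, whence
\begin{align}
    \E{\frac{\max_{i\in[n]} X_i}{S}} \geq \frac{\ln n}{2N}\pr{\tfrac12 - e^{-1}} = \frac{\tfrac12 - e^{-1}}{2\log(e)}\cdot\frac{\log n}{N},
\end{align}
where I have converted $\ln n = \log(n)/\log(e)$. This matches the claimed constant exactly, and the inequality can be made strict because Markov's bound is not tight for finite $N$.

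The only genuine subtlety — and the step I would treat most carefully — is the coupling between numerator and denominator: one cannot simply multiply a lower bound on $\E{\max_i X_i}$ by an upper bound on $\E{1/S}$, since these are far from independent. The indicator/threshold trick above resolves this by reducing everything to the probability of a single good event, which then splits into two separately-analyzable pieces: stochastic domination by an exponential for the maximum, and a first-moment tail bound for the sum. I would also verify the edge case $n=1$, where the right-hand side vanishes and the bound is trivial, and confirm that the application in Theorem~\ref{th:lower-bound-unif} corresponds to $k_i = \tr{\Pi_k(m)}$ with $N = \sum_{m} \tr{\Pi_k(m)} = d$.
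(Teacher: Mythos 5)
Your proof is correct and follows essentially the same route as the paper's: lower-bounding the ratio on the joint event $\{\max_i X_i \geq \ln n,\ \sum_i X_i \leq 2\sum_i k_i\}$, handling the sum via Markov's inequality and the maximum via $\prod_i \P{X_i < t} \leq (1-e^{-t})^n \leq e^{-ne^{-t}}$, yielding the identical constant $\frac{1/2 - e^{-1}}{2\log e}$. Your only departures are cosmetic (normalizing $\lambda = 1$ and invoking stochastic domination by an $\textnormal{Exp}(1)$ where the paper uses the explicit Erlang CDF, which amounts to the same observation).
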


\begin{proof}
We have for all $a, b>0$
\begin{align}
    \E{\frac{\max_{i\in [n]} X_i}{\sum_{i\in[n]} X_i}} 
    &\geq \frac{a}{b} \P{\max_{i\in [n]} X_i \geq a \text{ and } \sum_{i\in[n]} X_i \leq b}\\
    &\stackrel{(i)}{\geq} \frac{a}{b}\pr{1 -  \P{\max_{i\in [n]} X_i < a} - \P{\sum_{i\in[n]} X_i > b}}\\
    &\stackrel{(ii)}{\geq}  \frac{a}{b}\pr{1 -  \P{\max_{i\in [n]} X_i < a} - \frac{\sum_{i\in[n]}\E{X_i}}{b}}\\
    &=   \frac{a}{b}\pr{1 -  \P{\max_{i\in [n]} X_i < a} - \frac{\sum_{i\in[n]} k_i}{b\lambda}}
\end{align}
where $(i)$ follows from the union bound and $(ii)$ follows from Markov's inequality. Furthermore, by the independence of $X_1, \ldots, X_n$, we have
\begin{align}
    \P{\max_{i\in [n]} X_i < a} 
    &= \prod_{i\in[n]}  \P{X_i < a}\\
    &= \prod_{i\in[n]} \pr{1-e^{-\lambda a} \sum_{j=0}^{k_i-1} \frac{ (\lambda a)^j }{j!}}\\
    &\leq \pr{1-e^{-\lambda a}}^n \\ 
    &= e^{\ln(1-e^{-\lambda a})n}\\
    &\leq e^{-e^{-\lambda a}n}.
\end{align}
Setting $a=\ln n/\lambda$ and $b= \frac{2\sum_i k_i}{\lambda}$, we obtain that
\begin{align}
    \E{\frac{\max_{i\in [n]} X_i}{\sum_{i\in[n]} X_i}} 
    &\geq \frac{1 - e^{-1} -0.5 }{2}\frac{\ln n}{\sum k_i}\\
    &= \frac{1 - e^{-1} -0.5 }{2\log(e)}\frac{\log n}{\sum k_i}.
\end{align}
\qed\end{proof}

\section{Connection between uncloneable encryption and monogamy of entanglement game}

A MEG $\calG$ has three players Alice, Bob, and Charlie and is described by a set $\calM$, a random variable $K$ with distribution $P_K$, a Hilbert space $A$, and a POVM $\set{F_m^k}_{m\in \calM}$ for each realization $k$ of $K$. Bob and Charlie prepare a tripartite quantum state $\rho_{ABC}$ and pass sub-system $A$ to Alice. Bob and Charlie keep sub-systems $B$ and $C$, respectively, and no communication is allowed between them after state preparation. Alice performs POVM $\set{F_m^k}_{m\in \calM}$ on sub-system $A$ when $K=k$ and provides $K$ to both Bob and Charlie. Bob and Charlie then perform POVMs $\set{P_m^k}_{m\in \calM}$ and $\set{Q_m^k}_{m\in \calM}$, respectively, on their corresponding sub-system. Bob and Charlie win if all three parties obtain the same outcome from their measurements. The probability of winning is 
\begin{align}
    \E[k\sim P_K]{\sum_m\text{tr}(F_m^k\otimes P_m^k\otimes Q_m^k\rho_{ABC})}.
\end{align}
We denote by $\textnormal{p}_{\textnormal{win}}^*(\mathcal{G})$ the maximum probability of winning that Bob and Charlie can achieve for a fixed game $\calG$. In the next two lemmas, we establish lower-bounds on the probability of winning a MEG.

\begin{proposition}
Let $\calE = (\KeyGen, \Enc, \Dec)$ be a correct QECM with $M$ messages such that $\frac{1}{M}\sum_{m\in[M]}\Enc_k(m) =  \overline{\rho}$ independent of $k$. We define a MEG $\calG$ with the same key $k$ as $\calE$ and $F_m^k = \frac{1}{M} \overline{\rho}^{-\frac{1}{2}} \Enc_k(m)\overline{\rho}^{-\frac{1}{2}}  $ where the transpose is with respect to eigenvectors of $\overline{\rho}$. We then have
\begin{align}
    \pwinunifs{\calE} \leq \textnormal{p}_{\textnormal{win}}^*(\mathcal{G}).
\end{align}
\end{proposition}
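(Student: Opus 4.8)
The plan is to exhibit, for every cloning attack against $\calE$, a strategy for Bob and Charlie in the game $\calG$ that wins with exactly the same probability; taking the supremum over attacks then yields $\pwinunifs{\calE}\le\textnormal{p}_{\textnormal{win}}^*(\calG)$. The mechanism is the standard channel–state (steering / transpose-trick) correspondence, in which a channel applied to one half of an entangled state realizes a remotely prepared ensemble. Before anything else I would check that $\{F_m^k\}_{m\in[M]}$ is a genuine POVM for each $k$: each $F_m^k\ge 0$ since it is a conjugation of the positive operator $\Enc_k(m)$ (equivalently its transpose) by $\overline{\rho}^{-1/2}$, and, using the hypothesis $\frac1M\sum_m\Enc_k(m)=\overline{\rho}$,
\[
\sum_m F_m^k=\frac1M\,\overline{\rho}^{-1/2}\Big(\sum_m\Enc_k(m)\Big)\overline{\rho}^{-1/2}=\overline{\rho}^{-1/2}\,\overline{\rho}\,\overline{\rho}^{-1/2}=\one
\]
on $\supp{\overline{\rho}}$. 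If $\overline{\rho}$ is singular, $\overline{\rho}^{-1/2}$ denotes the pseudoinverse on $\supp{\overline{\rho}}$, into which every $\Enc_k(m)$ falls, and the residual weight on $\ker{\overline{\rho}}$ can be attached to an arbitrary outcome without changing the argument.

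Next, fix an arbitrary cloning attack $\calA=(\calN_{A\to BC},\{P_m^k\},\{Q_m^k\})$. I would have Bob and Charlie prepare the canonical purification $\ket{\psi}_{AA'}=\sum_i\sqrt{p_i}\,\ket{i}_A\ket{i}_{A'}$ of $\overline{\rho}=\sum_i p_i\proj{i}$, where $A'\cong A$ and $\{\ket{i}\}$ is the eigenbasis of $\overline{\rho}$ (so that the marginal on Alice's register $A$ is exactly $\overline{\rho}$). They then apply the cloning channel to the $A'$-half, producing $\rho_{ABC}=(\id_A\otimes\calN_{A'\to BC})(\proj{\psi})$, send $A$ to Alice, and keep $B$ and $C$. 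As their guessing measurements they simply reuse $\{P_m^k\}$ and $\{Q_m^k\}$ from the attack.

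The key step is the steering identity: for any operator $F$ on $A$,
\[
\ptr{A}{(F\otimes\one_{A'})\proj{\psi}}=\overline{\rho}^{1/2}\,F^{T}\,\overline{\rho}^{1/2},
\]
where $T$ is the transpose in the eigenbasis of $\overline{\rho}$. Substituting $F=F_m^k$, the transpose built into the definition of $F_m^k$ is precisely what makes this land on $\overline{\rho}^{1/2}(F_m^k)^{T}\overline{\rho}^{1/2}=\frac1M\Enc_k(m)$ (rather than its transpose). Since $F_m^k$ acts on $A$ while $\calN$ acts on $A'\to BC$, these commute, so conditioned on Alice obtaining outcome $m$ the sub-normalized state on $BC$ is $\ptr{A}{(F_m^k\otimes\one_{BC})\rho_{ABC}}=\calN\!\big(\tfrac1M\Enc_k(m)\big)$.

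Putting this together, the winning probability of the constructed strategy is
\begin{align*}
\E[k]{\sum_m\tr{(F_m^k\otimes P_m^k\otimes Q_m^k)\rho_{ABC}}}
&=\E[k]{\frac1M\sum_m\tr{(P_m^k\otimes Q_m^k)\calN(\Enc_k(m))}}\\
&=\pwinunif{\calE}{\calA}.
\end{align*}
As this holds for every attack $\calA$, and $\textnormal{p}_{\textnormal{win}}^*(\calG)$ upper-bounds the value of any particular Bob/Charlie strategy, taking the supremum over $\calA$ gives $\pwinunifs{\calE}\le\textnormal{p}_{\textnormal{win}}^*(\calG)$. I expect the only real obstacle to be bookkeeping rather than ideas: making the transpose in $F_m^k$ land so that the steered state is $\frac1M\Enc_k(m)$ and not its transpose, and treating a possibly-singular $\overline{\rho}$ cleanly; the conceptual content is just the Choi/steering duality between cloning channels and shared entangled resources.
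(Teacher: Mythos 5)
Your proposal is correct and follows essentially the same route as the paper's proof: both prepare the purification of $\overline{\rho}$, apply the cloning channel to one half (i.e., form the Choi state $J(\calN)$ with respect to $\overline{\rho}$), reuse the attack's measurements for Bob and Charlie, and invoke the transpose/steering identity $\ptr{A}{\pr{(\overline{\rho}^{-1/2}X^T\overline{\rho}^{-1/2})\otimes\one_{BC}}J(\calN)}=\calN(X)$ to show the game value matches $\pwinunif{\calE}{\calA}$ exactly. Your added checks (POVM validity, singular $\overline{\rho}$, and where the transpose must sit) are sensible bookkeeping that the paper leaves implicit, but they do not change the argument.
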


\begin{proof}
We fix an arbitrary cloning attack $\calA = (\calN_{A\to BC}, \set{P_m^k}, \set{Q_m^k})$ on $\calE$ and show that $\pwinunif{\calE}{\calA} \leq \textnormal{p}_{\textnormal{win}}^*(\mathcal{G})$. Let $\overline{\rho} = \sum_{i\in[d]} \lambda_i \kb{e_i}$ where $e = (\ket{e_0}, \cdots, \ket{e_{d-1}})$ is an orthonormal basis  for $A$. We define $\ket{\Phi} \eqdef \sum_{i\in[d]}\sqrt{\lambda_i} \ket{e_i}\otimes \ket{e_i}$ and the Choi isomorphism $J$ with respect to $\overline{\rho}$ that maps a quantum channel $\calN_{A\to BC}$ to a density operator 
\begin{align}
    (\id_A \otimes \calN_{A\to BC})(\kb{\Phi}) = \sum_{i\in[d]} \sum_{j\in[d]}\sqrt{\lambda_i \lambda_j} \ket{e_i}\bra{e_j} \otimes \calN_{A\to BC}(\ket{e_i}\bra{e_j}).
\end{align}
We have for all operators $X$ acting on $A$
\begin{align}
    \textnormal{tr}_{A}\pr{\pr{(\overline{\rho}^{-\frac{1}{2}} X^T\overline{\rho}^{-\frac{1}{2}}) \otimes \one_{BC}} J(\calN)} 
    &= \sum_{i\in[d], j\in[d]}   \sqrt{\lambda_i \lambda_j}  \textnormal{tr}_{A}\pr{\pr{(\overline{\rho}^{-\frac{1}{2}} X^T\overline{\rho}^{-\frac{1}{2}}) \otimes \one_{BC}} \ket{e_i}\bra{e_j} \otimes \calN_{A\to BC}(\ket{e_i}\bra{e_j})} \\
    &=\sum_{i\in[d], j\in[d]}   \sqrt{\lambda_i \lambda_j} \tr{\overline{\rho}^{-\frac{1}{2}} X^T\overline{\rho}^{-\frac{1}{2}}\ket{e_i}\bra{e_j}} \calN_{A\to BC}(\ket{e_i}\bra{e_j})\\
    &= \sum_{i\in[d], j\in[d]} \bra{e_i} X \ket{e_j}\calN_{A\to BC}(\ket{e_i}\bra{e_j}) = \calN\pr{X}.
\end{align}
 We then have 
\needspace{\begin{align}
     &\E[k\leftarrow \KeyGen]{\sum_{m\in \calM}\text{tr}(F_m^k\otimes P_m^k\otimes Q_m^kJ(\calN_{A\to BC}))}\\
     &= \frac{1}{\kappa}\sum_{k\in [\kappa]}\sum_{m\in \calM}\tr{F_m^k\otimes P_m^k\otimes Q_m^k\pr{    \frac{1}{d}\sum_{i=1}^d \sum_{j=1}^d \ket{e_i}\bra{e_j} \otimes \calN_{A\to BC}(\ket{e_i}\bra{e_j})}}\\
     &= \frac{1}{d\kappa}\sum_{k\in [\kappa]}\sum_{m\in \calM}\sum_{i=1}^d\sum_{j=1}^d  \bra{e_j}F_m^k \ket{e_i}\tr{ P_m^k\otimes Q_m^k \calN_{A\to BC}(\ket{e_i}\bra{e_j})}\\
     &= \frac{1}{d\kappa}\sum_{k\in [\kappa]}\sum_{m\in \calM}\tr{ P_m^k\otimes Q_m^k \calN_{A\to BC}(\Pi_{m}^k)}=\frac{1}{\card{\calM}\kappa} \sum_{m\in \calM}  \sum_k\text{tr}(P_m^k\otimes Q_m^k \calN_{A\to BC}(\Enc_k(m))).
\end{align}}
{\begin{align}
        &\pwinunif{\calE}{\calA} = \E[k\leftarrow \KeyGen]{\frac{1}{M} \sum_{m\in [M]}  \tr{P_m^k\otimes Q_m^k \calN_{A\to BC}(\Enc_k(m)})}\\
        &~~~~~~~~~~~~~~~~~~~~~=\E[k\leftarrow \KeyGen]{\frac{1}{M} \sum_{m\in [M]}  \tr{P_m^k\otimes Q_m^k               \textnormal{tr}_{A}\pr{\pr{(\overline{\rho}^{-\frac{1}{2}} \Enc_k(m)^T\overline{\rho}^{-\frac{1}{2}}) \otimes \one_{BC}} J(\calN)} }}\\
        &~~~~~~~~~~~~~~~~~~~~~=\E[k\leftarrow \KeyGen]{\frac{1}{M} \sum_{m\in [M]}  \tr{\one_A \otimes P_m^k\otimes Q_m^k  {\pr{(\overline{\rho}^{-\frac{1}{2}} \Enc_k(m)^T\overline{\rho}^{-\frac{1}{2}}) \otimes \one_{BC}} J(\calN)} }}\\
		&~~~~~~~~~~~~~~~~~~~~~=\E[k\leftarrow \KeyGen]{\sum_{m\in [M]}\text{tr}(F_m^k\otimes P_m^k\otimes Q_m^kJ(\calN))} \leq \textnormal{p}_{\textnormal{win}}^*(\mathcal{G}),
\end{align}}
as claimed.

\end{proof}

\end{document}